\DeclareMathOperator*{\tsum}{\textstyle\sum}
\newtheorem{proposition}{Proposition}
\newtheorem{lemma}{Lemma}
\newtheorem{theorem}{Theorem}
\newtheorem{assumption}{Assumption}
\renewcommand{\v}[1]{\ensuremath{\boldsymbol{#1}}}
\let\sss= \scriptscriptstyle
\begin{document}

\title{
{Adaptive Differentially Quantized Subspace Perturbation (ADQSP): A Unified Framework for Privacy-Preserving Distributed Average Consensus}
\author{Qiongxiu Li, Jaron Skovsted Gundersen, Milan Lopuha\"a-Zwakenberg and Richard Heusdens

 \thanks{Q. Li is with Tsinghua University, China (email: qiongxiu\allowbreak li@mail.tsinghua.edu.cn). }
 \thanks{J. S. Gundersen is with the Department of Electronic Systems, Aalborg University, Denmark, (email:jaron@es.aau.dk).}
 \thanks{M. Lopuha\"a-Zwakenberg is with University of Twente, the Netherlands, (email:m.a.lopuhaa@utwente.nl).}
 \thanks{R. Heusdens is with the Netherlands Defence Academy (NLDA), the Netherlands, and with the Faculty of Electrical Engineering, Mathematics and Computer Science, Delft University of Technology, Delft, the Netherlands (email: r.heusdens@\{mindef.nl,tudelft.nl\}).}
}

}

\maketitle
\raggedbottom

\addtolength{\abovedisplayskip}{-1.0mm}
\addtolength{\belowdisplayskip}{-1.0mm}
\begin{abstract}
Privacy-preserving distributed average consensus has received significant attention recently due to its wide applicability.  Based on the achieved performances, existing approaches can be broadly classified into perfect accuracy-prioritized approaches such as secure multiparty computation (SMPC), and worst-case privacy-prioritized approaches such as differential privacy (DP). Methods of the first class achieve perfect output accuracy but reveal some private information, while methods from the second class provide privacy against the strongest adversary at the cost of a loss of accuracy. 
In this paper, we propose a general approach named adaptive differentially quantized subspace perturbation (ADQSP) which combines quantization schemes with so-called subspace perturbation. Although not relying on cryptographic primitives, the proposed approach enjoys the benefits of both accuracy-prioritized and privacy-prioritized methods and is able to unify them. More specifically, we show that by varying a single quantization parameter the proposed method can vary between SMPC-type performances and DP-type performances. Our results show the potential of exploiting traditional distributed signal processing tools for providing cryptographic guarantees. In addition to a comprehensive theoretical analysis, numerical validations are conducted to substantiate our results. 
\end{abstract}

\begin{IEEEkeywords}
Secure multiparty computation, differential privacy, decentralized networks, subspace perturbation, data aggregation, consensus, quantization.
\end{IEEEkeywords}

\section{Introduction}

As the world is becoming increasingly interconnected and digitized, data are often collected and stored in personal devices such as tablets and phones \cite{poushter2016smartphone}. To process such massive amounts of data over different devices poses many challenges including: (1) the requirement for distributed processing tools that are able to process data in a network of devices without any centralized coordination; (2) the need for lightweight solutions as these devices are often limited in computational resources, 
and (3) the demand for privacy-preserving algorithms, 
as these devices often contain sensitive personal data captured by sensors such as images and GPS data \cite{poushter2016smartphone}. Combined, these challenges call for interdisciplinary research across different fields such as cryptography, information theory and distributed signal processing for developing efficient and privacy-preserving distributed computation methods. 

The average consensus problem has been intensively investigated and widely applied in many applications, e.g., optimization \cite{olfati2007consensus}, group coordination \cite{tsitsiklis1984problems}, and federated learning \cite{li2020federated}. Recently, privacy-preserving distributed average consensus, which aims to guarantee all participants in a network to output an accurate average consensus without violating privacy concerns, has received a lot of attention \cite{hendriks2013privacy,ruan2017secure,braca2016learning,hale2015differentially,hale2018cloud, nozari2017differentially,kefayati2007secure,huang2012differentially,guo2018practical, gupta2017privacy,gupta2019statistical,li2019privacyA,li2019privacyS,Jane2020ICASSP,Jane2020TSP,manitara2013privacy,mo2017privacy,he2019privacy,xiong2022privacy,zhou2022private}. 
There are at least five aspects to be considered when designing privacy-preserving distributed average consensus algorithms: centralized vs.\ decentralized coordination, computational complexity, communication costs, and most importantly, the achieved privacy level and output accuracy. 
With respect to the first concern, in this paper we only consider fully distributed/decentralized solutions which do not require any centralized coordination such as a trusted third party. 
Secondly, the computational complexity depends heavily on the so-called security model, i.e., whether we demand computational or information-theoretical security. 
In this work, we will restrict ourselves to information-theoretical security based algorithms since they are generally computationally less complex compared to computational security based approaches as the former does not involve complex encryption operations. 

With respect to communication cost, privacy, and accuracy, most existing work focuses on privacy and accuracy, while the aspect of communication cost is rarely addressed. Ideally, one would like perfect accuracy while guaranteeing privacy even against the strongest adversary, which is all participants except one collude. This, however, has been shown to be impossible \cite{Jane2020TIFS}. The main reason is that having knowledge of the average result will reveal the private data held by the only non-colluding participant.
As a result, existing work falls into two categories. The first category contains algorithms that guarantee privacy against the worst-case adversary using differential privacy (DP) techniques \cite{dwork2006,dwork2006calibrating} by inserting noise into the calculations \cite{nozari2017differentially,kefayati2007secure,huang2012differentially,guo2018practical}. This guarantees one individual's privacy even when all other participants are compromised, at the cost of output accuracy. The second category contains secure multiparty computation (SMPC) based approaches like secret sharing \cite{gupta2017privacy,gupta2019statistical,li2019privacyA,li2019privacyS} and correlated noise insertion \cite{manitara2013privacy,mo2017privacy,he2019privacy}, which offer perfect accuracy but only guarantee privacy under additional assumptions on the set of compromised participants.


\subsection{Paper contribution}

In this paper, we present a communication efficient algorithm that links the performances of SMPC and DP approaches in a unified framework. Our key results are summarized as follows:
\begin{enumerate}
  \item We analyze the problem of privacy-preserving distributed average consensus over networks and prove two impossibility results about the performances of privacy and accuracy given the network topology. In particular, we prove that the so-called ideal world defined in SMPC is impossible to achieve if the adversary disconnects the network. 
  \item We propose a novel approach, referred to as \emph{adaptive differentially quantized subspace perturbation (ADQSP)}, by combining the potential of quantization schemes and the subspace perturbation technique. It enjoys superior performance compared to existing approaches in terms of accuracy, privacy and communication efficiency. 
  \item We show that our proposed approach can achieve SMPC and (relaxed) DP performances by appropriate parameter settings. To the best of our knowledge, the proposed approach is the first information-theoretical approach that unifies both SMPC and DP in this context.
\end{enumerate}
Both theoretical investigations and numerical validations are presented to consolidate our claims. We published preliminary results in \cite{Jane2022twoforone} where only the connection of quantized subspace perturbation and DP is shown. In this paper we show that the proposed ADQSP approach can achieve both SMPC and DP related performances. In addition, we give a complete information-theoretical analysis of privacy via  mutual information. 

\subsection{Outline and Notation}
The paper is organized as follows. In Section \ref{sec.pre}, we introduce the problem setup and define the adversary model and privacy metrics. In Section \ref{sec.imRes}, we analyze the problem at hand and prove two impossibility results regarding the performances. In Section \ref{sec.prop}, we introduce the proposed approach. In Section \ref{sec.smpc_dp}, we analyze the performance of existing SMPC and DP approaches, while in Section \ref{sec.prop2exist}, we analyze the performance of the proposed approach and show that by tuning a single parameter the performance can vary between that of SMPC and (relaxed) DP.
Numerical validations are demonstrated in Section \ref{sec.numRes} and conclusions are drawn in Section \ref{sec.conclu}.

We use bold letters for vectors ($\v x$) and matrices ($\v X$), where capital letters are used for matrices. Sets are denoted by calligraphic letters ($\mathcal{X}$). $\v I$ denotes the identity matrix. We refer to the $i$-th entry of a vector $\v x$ as $x_i$.\footnote{For simplicity we assume that the entries $x_i$ are scalar variables but the results can easily be generalized to arbitrary dimensions.}  With a slight abuse of notation, we denote a random variable by a capital letter $X$ no matter if the outcome is a scalar $x$, a vector $\v x$, or a matrix $\v X$. 
%

\section{Preliminaries} \label{sec.pre}
In this section we review some necessary fundamentals for the remainder of the paper.
\subsection{Privacy-preserving distributed average consensus}\label{subsec.network} 
A network is modelled as a graph $\mathcal{G} = ({\cal V},\mathcal{E})$, where ${\cal V} = \{1,\ldots,n\}$ is the set of vertices representing the nodes/agents/participants in the network, and $\mathcal{E} \subseteq {\cal V}\times {\cal V}$ is the set of $m := |\mathcal{E}|$ (undirected) edges, representing the communication links in the network. The set of neighbors of node $i$ is denoted as ${\cal N}_i=\{j\,| \,(i,j)\in \mathcal{E}\}$ and its degree is $d_i=|{\cal N}_i|$. A graph is connected if for every pair of nodes there is a path between them.
In this paper, for the feasibility of the average output, we assume the graph $\mathcal{G}$ is connected.

Assume each node $i\in {\cal V}$ in the network holds private data $s_i$. The goal of privacy-preserving distributed average consensus is to allow each node to obtain the average of all private data over the network, i.e., 
\begin{align}\label{eq:setup}
  s_{\sss \mathrm{ave}}=\frac{1}{n}\sum_{\sss i\in {\cal V}}s_i,
\end{align}
without revealing each node's private data and without any centralized coordination.

\subsection{Adversary model}\label{subsec.adv}
In this paper we consider both the well-known eavesdropping adversary and the so-called passive (i.e., honest-but-curious) adversary. Throughout the paper, we assume these two adversaries can cooperate by sharing information to increase the chance of inferring private data. The eavesdropping adversary works by eavesdropping all communication channels (edges) between nodes in the network.
The passive adversary is assumed to be able to corrupt a subset of the nodes, referred to as {\em corrupt nodes}. The nodes that are not corrupt, on the other hand, are referred to as {\em honest nodes}. The corrupt nodes are assumed to not deviate from the protocol, but collect all information they see throughout the protocol. The collected information includes for example the inputs of the corrupt nodes and the messages they receive during the computation from neighboring nodes. Denote the set of honest nodes as ${\cal V}_h$ and the corrupt nodes as ${\cal V}_c$. We let $\mathcal{E}_h = \{(i,j)\in \mathcal{E}\mid i,j\in {\cal V}_h\}$ be the set of edges between two honest nodes and $\mathcal{E}_c=\mathcal{E}\setminus \mathcal{E}_h$ be the set of edges containing at least one corrupt node. Furthermore, the set of honest neighbors of node $i$ is denoted by ${\cal N}_{\sss i,h}=\{j\, |\, (i,j)\in \mathcal{E}_h\}$.

\subsection{Performance evaluation and corresponding metrics}\label{subsec.metric}
To evaluate the performance of a privacy-preserving algorithm, there are at least two key requirements.
\subsubsection{Output accuracy}
The output accuracy measures how good the functionality of the algorithm is. A typical way to quantify the output accuracy is to adopt some distance measures to evaluate ``how far'' the output of the privacy-preserving algorithm is, denoted as $\tilde{\v y}\in \mathbb{R}^{n}$, to the desired output $\v y\in \mathbb{R}^{n}$. 
In this paper we use the mean squared error (MSE) to quantify the output accuracy as it is widely used for iterative processes. It is defined as:
\begin{align}
  e=\frac{1}{n} \left|\left|\tilde{\v y}-\v y \right|\right|^{2}. \label{eq:MSE}
\end{align}

\subsubsection{Individual privacy}\label{sec:MI}
Individual privacy measures how well the private data of an honest node is being protected against the considered adversaries. Since the approaches considered here are information-theoretical ones, privacy should also be quantified using an information-theoretical measure. Widely used information-theoretical metrics for assessing privacy loss include for example  
$\epsilon$-differential privacy and mutual information \cite{cover2012elements}.  In this paper we adopt mutual information as our primary privacy metric for two main reasons. Firstly,  its efficacy has been demonstrated in the realm of privacy-preserving distributed processing and in various applications \cite{Jane2020TIFS,bu2020tightening}. Secondly, it has been shown in \cite{cuff2016differential,barthe2011information}  that mutual information is a relaxed version of $\epsilon$-differential privacy (thus in Section \ref{sec.smpc_dp} we refer that our proposed approach can achieve performances of relaxed DP approaches). And it is easier to implement mutual information in practice \cite{gotz2011publishing}. 
 Considering 
two (continuous) random variables $X$ and $Y$, the mutual information is defined by
\begin{align}\label{eq:MI_def}
  {\rm I}(X;Y)=h(X)-h(X|Y),
\end{align}
where $h(X)$ denotes the differential entropy  of $X$ and $h(X|Y)$ is the conditional differential entropy, assuming they exist.\footnote{In the case of discrete random variables we replace both differential entropies by the Shannon entropy so that $I(X;Y) = H(X) - H(Y|X)$.} We have that ${\rm I}(X;Y)=0$ when $X$ and $Y$ are independent, meaning that $Y$ does not carry any information about $X$ and ${\rm I}(X;Y)$ is maximal when there is a one-to-one relation between $Y$ and $X$.

\section{Problem analysis and impossibility results} \label{sec.imRes}
In this section we prove two impossibility results, showing that in distributed average consensus 1) the twin goals of perfect accuracy and worse-case privacy guarantee cannot be achieved simultaneously; 2) if the subgraph of honest nodes is disconnected, then no protocol can achieve the performances of the case where a trusted third party (TTP) is assumed to be available. These two results are related to the performances of two well-established privacy-preserving techniques: DP and SMPC, respectively. Without loss of generality, throughout the paper, we assume the following (recall the private data $s_i$ is a realization of random variable $S_i$).
\begin{assumption}\label{asu.1}
All private data are statistically independent, i.e., $\forall i,j \in {\cal V}, i\neq j: {\rm I}(S_i;S_j)=0$. 
\end{assumption}
 
\begin{assumption}\label{asu.2}
The size of the network $n=|{\cal V}|$ is known to all nodes. Hence, knowing the average of the data implies knowing the total sum and vice versa. 
\end{assumption}

\noindent
We have the following results. 
\subsection{Impossibility result I: Perfect accuracy and worst-case privacy cannot be achieved simultaneously} 
\label{subsec.imDP}
\begin{lemma}
By assuming perfect accuracy and worst-case privacy, i.e., ${\cal V}_c={\cal V}\setminus\{i\}$, the individual privacy leakage becomes 
\begin{align}\label{eq:dptrade}
{\rm I}(S_i;\tsum_{\sss j\in {\cal V}} S_j,\{S_j\}_{\sss j\in {\cal V}_c})={\rm I}(S_i;S_i),
\end{align} 
which is maximal.
\end{lemma}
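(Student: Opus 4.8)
The plan is to show that, under worst-case corruption, what the adversary observes is informationally equivalent to $S_i$ itself, so that the left-hand side of \eqref{eq:dptrade} collapses to the self-information ${\rm I}(S_i;S_i)$.

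First I would unpack what perfect accuracy gives the adversary. Perfect accuracy means the protocol output equals the desired output, $\tilde{\v y}=\v y=s_{\sss\mathrm{ave}}\v 1$, i.e.\ the error $e$ in \eqref{eq:MSE} is zero. Since the corrupt nodes observe their own outputs, they learn $s_{\sss\mathrm{ave}}$ exactly, and by Assumption \ref{asu.2} this is equivalent to learning the total sum $\tsum_{\sss j\in{\cal V}}S_j$. Hence the conditioning variable $\big(\tsum_{\sss j\in{\cal V}}S_j,\{S_j\}_{\sss j\in{\cal V}_c}\big)$ in \eqref{eq:dptrade} is indeed available to the adversary.

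Next I would exploit ${\cal V}_c={\cal V}\setminus\{i\}$: the adversary knows every $S_j$ with $j\neq i$, so $S_i=\tsum_{\sss j\in{\cal V}}S_j-\tsum_{\sss j\in{\cal V}_c}S_j$ is a deterministic measurable function of the conditioning variables, while conversely $\tsum_{\sss j\in{\cal V}}S_j=S_i+\tsum_{\sss j\in{\cal V}_c}S_j$. Thus the pair $\big(\tsum_{\sss j\in{\cal V}}S_j,\{S_j\}_{\sss j\in{\cal V}_c}\big)$ and the pair $\big(S_i,\{S_j\}_{\sss j\in{\cal V}_c}\big)$ generate the same $\sigma$-algebra, so ${\rm I}\big(S_i;\tsum_{\sss j\in{\cal V}}S_j,\{S_j\}_{\sss j\in{\cal V}_c}\big)={\rm I}\big(S_i;S_i,\{S_j\}_{\sss j\in{\cal V}_c}\big)$. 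Now the data-processing inequality gives ${\rm I}(S_i;S_i,\{S_j\}_{\sss j\in{\cal V}_c})\ge {\rm I}(S_i;S_i)$, while appending the block $\{S_j\}_{\sss j\in{\cal V}_c}$, which is independent of $S_i$ by Assumption \ref{asu.1}, to an observation that already contains $S_i$ cannot increase the information about $S_i$; hence equality holds and \eqref{eq:dptrade} follows. Maximality is then immediate since ${\rm I}(S_i;Z)\le {\rm I}(S_i;S_i)$ for every observation $Z$ (in the discrete case this is $H(S_i)$, attained exactly when $S_i$ is recoverable from $Z$; in the continuous case ${\rm I}(S_i;S_i)$ is the supremal value).

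The step I expect to be the main obstacle is the measure-theoretic bookkeeping in the continuous case: naive substitution into ${\rm I}(X;Y)=h(X)-h(X\mid Y)$ produces terms such as $h(S_i\mid S_i)=-\infty$, so the argument that conditioning on the adversary's view is ``the same as'' conditioning on $S_i$ must be carried out via the deterministic-function / equal-$\sigma$-algebra characterization and the data-processing inequality rather than by differential-entropy arithmetic; I would keep the $h(\cdot)$-based manipulations only for the discrete setting flagged in the footnote, and for the continuous setting rely solely on the reconstruction argument plus Assumption \ref{asu.1}.
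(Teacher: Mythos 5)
Your proposal is correct and rests on exactly the same observation as the paper's one-line proof, namely that $\tsum_{\sss j\in{\cal V}}S_j-\tsum_{\sss j\in{\cal V}_c}S_j=S_i$, so the adversary's view determines $S_i$; you simply spell out the bijection, independence, and data-processing details that the paper leaves as ``trivial.'' No gap — the extra care about avoiding $h(S_i\mid S_i)=-\infty$ in the continuous case is a sensible refinement but not a departure from the paper's argument.
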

The proof follows trivially from the fact that $\tsum_{\sss j\in {\cal V}} S_j - \tsum_{\sss j\in {\cal V}_c} S_j = S_i$ \cite{Jane2020TIFS}. Similar results have also been proved via different metrics such as differential privacy \cite{nozari2017differentially}.
Thus, if there are $n-1$ corrupt nodes, it is impossible to achieve any privacy for the only remaining honest node without compromising accuracy. Hence, there is an inherent trade-off between privacy and accuracy for differential privacy related approaches. 


\subsection{Impossibility result II: An SMPC-like ideal world cannot be achieved unless all honest nodes are connected} \label{subsec.smpcImp}

In SMPC, the \emph{ideal world} describes the setting in which a TTP is assumed to be available. The TTP will (securely) collect all private data from each node and then compute the output of the function and send back the output to the nodes. However, in practice a TTP might not exist. The goal of SMPC is to design a protocol to replace a TTP. In the context of average consensus, as the corrupt nodes have available both their own private data and the average result, the information observed by the adversary in an ideal world setting is given by:
\begin{align}\label{eq:videal}
  {\cal O}_{\sss \mathrm{ideal},{\cal V}_c}&=\{S_i\}_{\sss i\in{\cal V}_c}\cup \{\tsum_{\sss i\in {\cal V}}S_i\}.
\end{align}
Hence, for an arbitrary honest node $i\in{\cal V}_h$ we have 
\begin{align}\label{eq:ideal_inf}
{\rm I}(S_i;{\cal O}_{\sss \mathrm{ideal},{\cal V}_c})={\rm I}(S_i;\tsum_{\sss i\in {\cal V}_h}S_i).
\end{align}
Hence, a perfect SMPC protocol for privacy-preserving distributed average consensus outputs the correct average value and guarantees that the adversary obtains no more information about the honest nodes' private data than \eqref{eq:ideal_inf}. The following result shows that such a protocol cannot exist when the honest nodes do not form a connected subgraph.

\begin{proposition}\label{prop:smpc}
Let $\mathcal{G}_h=({\cal V}_h,\mathcal{E}_h)$ be the subgraph of $\mathcal{G}$ obtained by removing all corrupt nodes. 
Let $\mathcal{G}_{\sss h,1},\ldots,\mathcal{G}_{\sss h,k_h}$ be the components (connected subgraphs that are not part of any larger connected subgraph) of $\mathcal{G}_h$ and let ${\cal V}_{\sss h,k}$ denote the vertex set of $\mathcal{G}_{\sss h,k}$.
Then for any protocol outputting $f(s_1,s_2,\ldots,s_n)=\tsum_{\sss i=1}^n s_i$ to all nodes, after the protocol the adversary will always have learned $\{\tsum_{\sss i\in {\cal V}_{\sss h,k}}S_i\}_{\sss k=1,2,\ldots,k_h}$.
\end{proposition}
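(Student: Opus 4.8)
The plan is to argue by an indistinguishability (information-theoretic) argument: from the adversary's vantage point, the only thing that matters about the honest nodes in a given component $\mathcal{G}_{h,k}$ is the aggregate $\tsum_{i\in {\cal V}_{h,k}} S_i$, and since the adversary learns the global sum, it can recover each component sum. First I would set up the following ``rearrangement'' observation. Fix a component $\mathcal{G}_{h,k}$ and consider two honest-data vectors $(s_i)_{i\in {\cal V}_h}$ and $(s_i')_{i\in {\cal V}_h}$ that agree outside ${\cal V}_{h,k}$ and satisfy $\tsum_{i\in {\cal V}_{h,k}} s_i = \tsum_{i\in {\cal V}_{h,k}} s_i'$. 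I claim the adversary's entire view (all messages on edges in $\mathcal{E}_c$, plus the corrupt nodes' inputs and the protocol output) has the same distribution under both inputs. This is because a component of $\mathcal{G}_h$ is, by definition, not connected to any other honest node, so every edge leaving ${\cal V}_{h,k}$ goes to a corrupt node; but re-running the protocol, the honest nodes in ${\cal V}_{h,k}$ can ``simulate'' their behaviour consistently — more carefully, since the protocol is fixed and correctness forces the output $f=\tsum_i s_i$ to be the same (the two inputs have the same global sum, as they have the same component sum and agree elsewhere), and the corrupt nodes see the same output and the same messages from all honest components other than ${\cal V}_{h,k}$.

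The crux, which I expect to be the main obstacle, is making the claim ``the messages crossing from ${\cal V}_{h,k}$ into ${\cal V}_c$ have the same distribution'' rigorous without assuming a specific protocol structure. The clean way is a hybrid/simulation argument: I would argue that there is \emph{no} information-theoretic way for the adversary to distinguish the two inputs, using the fact that perfect correctness must hold for \emph{all} input vectors, in particular for a family of inputs obtained by perturbing the data inside ${\cal V}_{h,k}$ while preserving the component sum and then compensating inside another honest component (or, if $k_h=1$, this is vacuous and the statement reduces to Impossibility Result I applied to the single honest component). Concretely, suppose for contradiction that after the protocol the adversary has \emph{not} learned $\tsum_{i\in {\cal V}_{h,k}} S_i$; then there exist two honest input vectors with different component-$k$ sums but the same global sum (perturb ${\cal V}_{h,k}$ by $+\delta$ in one coordinate and another component by $-\delta$) that induce the same adversary view with positive probability. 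But then, by correctness, the protocol outputs $\tsum_i s_i$ in the first case and $\tsum_i s_i' = \tsum_i s_i$ in the second — wait, these are equal, so no contradiction from the output alone; the contradiction must instead come from combining views across the whole family, which is exactly why the right statement is that the adversary learns each \emph{component sum} (an invariant) but nothing finer.

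Putting it together: I would (i) formalize the adversary's view $\mathcal{O}_{{\cal V}_c}$ as a (possibly randomized) function of the honest inputs; (ii) show this view is invariant under any reassignment of honest data that preserves every component sum $\tsum_{i\in {\cal V}_{h,k}} s_i$ and the inter-component structure — the key topological input being that $\mathcal{E}_h$ has no edges between distinct components, so the adversary's messages decouple across components and can be re-simulated; (iii) conclude by correctness that the output $\tsum_i s_i = \tsum_k \tsum_{i\in {\cal V}_{h,k}} s_i$ is determined by the component sums, so together with the corrupt inputs the adversary can solve for $\{\tsum_{i\in {\cal V}_{h,k}} S_i\}_k$ (this is a linear system that the view makes deterministic), hence these quantities are ``learned''; and (iv) note the matching negative direction — nothing beyond these sums is forced — is not needed for the stated claim. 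The one genuinely delicate point, worth a dedicated lemma, is step (ii): showing the adversary cannot leverage cross-component correlations, which holds precisely because removing ${\cal V}_c$ disconnects the components, so honest messages never travel between them except through corrupt relays whose behaviour the adversary already controls and observes.
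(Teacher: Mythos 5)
There is a genuine gap, and it concerns the direction of the claim. Proposition~\ref{prop:smpc} is a \emph{lower} bound on leakage: whatever the protocol does, the adversary ends up knowing each component sum. Your plan, however, is built around an invariance statement --- that the adversary's view has the same distribution under any reassignment of honest inputs preserving the component sums --- which is an \emph{upper} bound on leakage, and is moreover false for an arbitrary correct protocol (a protocol that broadcasts all inputs in the clear computes the sum correctly, yet its view is certainly not invariant under such reassignments). Even if the invariance held, it would not let you ``solve a linear system'' for the individual component sums in your step (iii): the output together with the corrupt inputs gives only the single equation $\tsum_{k}\tsum_{i\in{\cal V}_{h,k}}s_i = f(\v s)-\tsum_{i\in{\cal V}_c}s_i$ in $k_h$ unknowns. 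You notice the problem yourself in your middle paragraph (``no contradiction from the output alone''), but the promised fix --- ``combining views across the whole family'' --- is never supplied.

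The missing idea, which is essentially the paper's entire proof, is a constructive extraction by simulation. Because a component $\mathcal{R}$ of $\mathcal{G}_h$ has no honest neighbours outside itself, every message into or out of $\mathcal{R}$ crosses an edge incident to a corrupt node, so the adversary has recorded all of $\mathcal{R}$'s incoming messages and can re-run $\mathcal{R}$'s side of the protocol offline with \emph{guessed} inputs $\tilde{\v s}_{\mathcal{R}}$ and randomness $\tilde{\v r}_{\mathcal{R}}$, feeding it the recorded messages from the other components, restarting until the transcript it produces matches the recorded one (the true $(\v s_{\mathcal{R}},\v r_{\mathcal{R}})$ is one valid guess, so a match exists). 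The matching guess defines a legitimate execution of the protocol on the hybrid input $(\v s_{\mathcal{L}},\v s_{{\cal V}_c},\tilde{\v s}_{\mathcal{R}})$ whose transcript, and hence whose output, coincides with the recorded output $f(\v s_{\mathcal{L}},\v s_{{\cal V}_c},\v s_{\mathcal{R}})$; perfect correctness applied to the hybrid input then forces $\tsum_{i\in\mathcal{R}}\tilde s_i=\tsum_{i\in\mathcal{R}}s_i$, and the left-hand side is a quantity the adversary computed itself. Subtracting from the output yields the remaining component sums. Your step (ii) contains the right topological observation (a component only communicates with the adversary), but you use it to argue that the adversary cannot distinguish inputs, whereas the proof needs it to argue that the adversary can impersonate a component and thereby extract its sum.
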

\begin{proof}
See Appendix~\ref{pf.smpc}.
\end{proof}

In other words, any algorithm with perfect accuracy will always leak the partial sums of the components of $\mathcal{G}_h$ to the adversary. Note that $\mathcal{G}_h$ is very likely to be disconnected in incomplete networks, especially in the presence of a lot of corrupt nodes. Hence, a perfect SMPC protocol for privacy-preserving distributed average consensus does not exist if the adversary disconnects the honest nodes \cite{Beimel2005}.

\subsection{Redefined ideal world for incomplete networks} \label{subsec.reideal}
Using Proposition \ref{prop:smpc}, we can derive a tighter bound on the information view of the adversary by taking the network topology into account, which we refer to as the redefined ideal world. The information observed by the adversary in this redefined ideal world setting, referred to as ${\cal O}_{\sss \mathrm{reideal}}$, is 
\begin{align}\label{eq:videal_new}
   {\cal O}_{\sss \mathrm{reideal},{\cal V}_c} =\{S_j\}_{\sss j\in{\cal V}_c}\cup\{\tsum_{\sss i\in {\cal V}_{\sss h,k}}S_i\}_{\sss k=1,2,\ldots,k_h}.
\end{align}
Hence, since ${\cal V}_{\sss h,k}\subseteq {\cal V}_h$ we have ${\rm I}(S_i; {\cal O}_{\sss \mathrm{reideal},{\cal V}_c}) \geq {\rm I}(S_i; {\cal O}_{\sss \mathrm{ideal},{\cal V}_c})$ and equality holds if and only if the subgraph $\mathcal{G}_h$ is connected, i.e., if $k_h=1$. 

For the simplicity of notation, assume that the honest node $i$ belongs to the first honest component, i.e., $i\in {\cal V}_{\sss h,1}$. We conclude that a perfect SMPC protocol should reveal no more information about $s_i$ than
\begin{align} \label{eq:reidealSMPC}
  &{\rm I}(S_i; {\cal O}_{\sss \mathrm{reideal},{\cal V}_c})={\rm I}(S_i;\tsum_{\sss j \in {\cal V}_{\sss h,1}} S_j).
\end{align}


\section{Proposed ADQSP protocol} \label{sec.prop}
The proposed ADQSP is constructed from two building blocks in distributed optimization, namely \emph{adaptive differential quantization} \cite{jane2022elsevier,Jane2022twoforone} and \emph{subspace perturbation} \cite{Jane2020ICASSP,Jane2020TSP}. 
In what follows, we first briefly introduce the fundamentals of distributed optimization and these two building blocks, and afterwards we explain the details of the proposed approach.

\subsection{Distributed optimization}\label{subsec.disOptim}
To solve \eqref{eq:setup} in a distributed manner, we first formulate it as a constrained optimization problem:
\begin{equation} \label{eq:setupAve}
\begin{array}{ll}{\displaystyle \min_{\{\sss x_1,\ldots,x_n\}}} & {{\displaystyle\tsum_{\sss i \in {\cal V}} \frac{1}{2}\| x_i - s_i\|^2}} \\ {\text { s.t. }} & {\forall} (i,j)\in \mathcal{E}:~ { x_i= x_j}, \end{array}
\end{equation}
where $x_i$ is a local optimization variable at node $i$ with
optimal value $x_i^*= s_{\sss \rm ave}$.
A typical way to solve the above problem is to apply a solver like ADMM \cite{boyd2011distributed} or PDMM \cite{zhang2018distributed, sherson2018derivation}. As shown in \cite{sherson2018derivation} these distributed optimization algorithms can be described in a general framework using monotone operator theory \cite{ryu2016primer} and operator splitting techniques. 
For average consensus, the local updating functions are given by
\begin{align}\label{eq:xup}
  & x_{\sss i}^{(t+1)} =\frac{ s_i- \tsum_{\sss j \in {\cal N}_i} B_{\sss i|j} z_{\sss i|j}^{(t)}}{1+cd_i} \\
  & z_{\sss j|i}^{(t+1)}=\theta z_{\sss j|i}^{(t)} +(1-\theta)\left( z_{\sss i|j}^{(t)}+2c B_{\sss i|j} x_i^{(t+1)}\right).
  \label{eq:zup}
\end{align}
Here $\v z \in \mathbb{R}^{2m}$ is an auxiliary variable having entries indicated by $z_{\sss i|j}$ and $z_{\sss j|i}$, held by node $i$ and $j$, respectively, related to edge $(i,j) \in \mathcal{E}$. 
The matrix $\v B \in \mathbb{R}^{m \times n}$ is the graph incidence matrix and $ B_{\sss i|j}$ and $ B_{\sss j|i}$ are edge-related (scalar) weights, which are related to entries of $\v B$ as $ B_{\sss k,i}= B_{\sss i|j}$ and $ B_{\sss k,j}= B_{\sss j|i}$ for $e_k = (i,j)$, i.e., the $k$-th edge in the graph.  We will use the convention that 
$ B_{\sss i|j}=1$ and $ B_{\sss j|i}=-1$ if $i<j$.
In addition, $c>0$ and $\theta \in [0,1)$ are constants, with $c$ controlling the convergence rate and $\theta$ controlling the averaging of the (non-expansive) operators (recall that $d_i=|\mathcal{N}_i|$ is the degree of node $i$). The case $\theta = 0$ corresponds to PDMM and $\theta = 0.5$ corresponds to ADMM.

The protocol is summarized in Algorithm \ref{alg:ave} where $t_{\max}$ denotes the maximum number of iterations. Note that there is also an alternative broadcast algorithm that requires only broadcasting of $x_i^{(t+1)}$'s to all neighbors, instead of exchanging the $z_{j|i}^{(t+1)}$'s one by one; the reception of $x_i^{(t+1)}$ is all that is needed for node $j\in{\cal N}_i$ to compute $z_{j|i}^{(t+1)}$ (see Appendix \ref{appen.broad} for details). 

\begin{algorithm}[t]
\caption{Distributed optimization for average consensus}
\label{alg:ave}
  At each $i \in {\cal V}$:
  \begin{enumerate}
 \item  Initialize $ z_{  i|j}^{(0)}=0$;
		\item For $t=0,1,\ldots,t_{\max}-1$ do
		\begin{enumerate}
		  \item Compute $ x_{  i}^{(t+1)}$ and $\{z_{  j|i}^{(t+1)}\}_{j \in {\cal N}_{  i}}$ using \eqref{eq:xup} and \eqref{eq:zup}, respectively;
		  \item Send $z_{  j|i}^{(t+1)}$ to neighbor $j\in {\cal N}_i$;
		\end{enumerate}
		\item \label{step.out} Output $ x_i^{(t_{\max})}$
  \end{enumerate}
\end{algorithm}

\subsection{Subspace perturbation}\label{subsec:Fund_Subspace}
By inspection of \eqref{eq:xup},  we can see that there can be two noise-insertion options to protect the private data $s_i$: adding noise to the optimization variable $\v x\in \mathbb{R}^{n}$ directly or adding noise to the auxiliary variable $\v z\in \mathbb{R}^{2m}$. The subspace perturbation approach adds noise to the auxiliary variable $\v z$, as it often has more degrees of freedom \cite{Jane2020ICASSP,Jane2020TSP}. We have shown in previous work \cite{Jane2020TSP,jane2022elsevier} that only a part of $\v z$ is predictable (like $\v x$ converges to a value known by the adversary), while the other part is not. To explain this in more detail, 
we first write the local functions \eqref{eq:xup} and \eqref{eq:zup} in a compact form as follows: 
\begin{align}
  \v x^{(t+1)}&=(\v I+c\v C^{\top}\v C)^{-1}(\v s-\v C^{\top}\v z^{(t)}) \label{eq:xupc}\\
  \v z^{(t+1)}&=\theta \v z^{(t)} +(1-\theta)\left(\v P\v z^{(t)}+2c\v P\v C \v x^{(t+1)}\right), \label{eq:zupc}
\end{align}
where $\v C=[\v B_{\sss +}^{\top},\v B_{\sss -}^{\top}]^{\top}\in \mathbb{R}^{2m\times n}$ and $\v B_{\sss +}$ and $\v B_{\sss -}$ contains the positive and negative entries of $\v B$, respectively. Furthermore, $\v P\in \mathbb{R}^{2m\times 2m}$ is a permutation matrix switching the upper $m$ rows and lower $m$ rows of the matrix it operates on, i.e., $\v P\v C=[\v B_{\sss -}^{\top},\v B_{\sss +}^{\top}]^{\top}$. 
Let $\Psi = \mathrm{ran}(\v C) + \mathrm{ran}(\v P\v C)$, whose orthogonal complement is equal to $\Psi^{\perp}= \ker(\v C^{\top}) \cap \ker((\v P\v C)^{\top})$. Moreover, let $\v\Pi_\Psi$ be the orthogonal projection onto $\Psi$ and let $\v z_{\sss \Psi}^{(t)} = \v\Pi_\Psi\v z^{(t)}$ and $ \v z_{\sss \Psi^\perp}^{(t)} = (\v I-\v\Pi_\Psi)\v z^{(t)}$. We then have the following decomposition
\begin{equation}\label{eq:zdecp}
\v z^{(t)}=\v z_{\sss \Psi}^{(t)}+\v z_{\sss \Psi^\perp}^{(t)}.
\end{equation}
It has been proven in \cite{jane2022elsevier} that 
\begin{align*}
\v z_{\sss \Psi^\perp}^{(t)} = \frac{1}{2}\left(\v z_{\sss \Psi^\perp}^{(0)} + \v P\v z_{\sss \Psi^\perp}^{(0)}\right) + \frac{1}{2}(2\theta-1)^t\left(\v z_{\sss \Psi^\perp}^{(0)} - \v P\v z_{\sss \Psi^\perp}^{(0)}\right).
\end{align*}
Thus, for a given graph and $\theta$, $\v z_{\sss \Psi^\perp}^{(t)}$ only depends on the initialization of the  auxiliary variable $\v z^{(0)}$.  The main idea of subspace perturbation is to initialize $\v z^{(0)}$ with a certain distribution having a sufficiently large variance, such that it can help to protect the private data from being revealed to others (see Proposition \ref{prop:share_SubspaceP} for details of the privacy proof). Moreover, by inspecting \eqref{eq:xupc}, we conclude that the $\v x$-update is not affected by $\v z_{\sss \Psi^\perp}^{(t)}$ as $(\v z_{\sss \Psi^\perp}^{(t)} )^{\top}\v C=\v 0$. Hence, the output accuracy is not affected.




\subsection{Adaptive differential quantization} \label{subsec:Quantized}
Adaptive differential quantization schemes in distributed optimization were first introduced in \cite{schellekens2017quantisation,jonkman2018quantisation}. 
They exploit the fact that when the algorithm converges, the difference between successive updates will converge to zero. 
Let $Q^{(t)}\colon \mathbb{R} \rightarrow \{\Delta^{(t)}(a+1/2)\}_{a\in\mathcal{A}}$ denote an $l$-bit uniform (mid-rise) quantization function where  $a\in \mathcal{A}=\{-2^{l-1},-2^{l-1}+1,\ldots,2^{l-1}-1\}$. 
Here $\Delta^{(t)}$ is the quantization cell-width which we define as
\begin{align}\label{eq:dmin}
   \Delta^{(t)} = \max\{\gamma^{t}\Delta^{(0)},\Delta_{\sss \mathrm{min}}\},
\end{align}
where $\gamma \in (0,1)$ is a constant for controlling the decreasing rate, $\Delta^{(0)}$ denotes the initial cell-width and $\Delta_{\sss \mathrm{min}}$ denotes the minimum cell-width.
Given cell-width $\Delta^{(t)}$ at iteration $t$, $Q^{(t)}$ maps each input into its nearest representation value (midpoint of the quantization cells) in $\{\Delta^{(t)}(a+1/2)\}_{\sss a\in\mathcal{A}}$.

Adaptive differential quantization does not operate on the auxiliary variable $\v z$ directly but on the difference of successive variables. That is, 
let $\hat{\v z}$ denote the quantized version of $\v z$ and
define $\Delta z_{\sss j|i}^{(t+1)}$ as
\begin{align} \label{eq:vz}
  \Delta z_{\sss j|i}^{(t+1)}= \left\{\begin{array}{ll} z_{\sss j|i}^{(1)}-z_{\sss j|i}^{(0)} & \textrm{ if $t=0$}\\ z_{\sss j|i}^{(t+1)}-\hat{z}_{\sss j|i}^{(t)} & \textrm{ if $t\geq1$}.
   \end{array}\right.
\end{align}
and
\begin{align}
  \Delta \hat{z}_{\sss j|i}^{(t+1)}= Q^{(t+1)}\left( \Delta z_{\sss j|i}^{(t+1)}\right). \label{eq:vupquant}
\end{align}
Upon receiving the quantized $ \Delta \hat{z}_{\sss j|i}$, each node then calculates $\hat{z}_{\sss j|i}$ as
\begin{align}
  \hat{z}_{\sss j|i}^{(t+1)} =
  \left\{\begin{array}{ll} z_{\sss j|i}^{(0)} +\Delta \hat{z}_{\sss j|i}^{(1)}& \textrm{ if $t=0$}\\ \hat{z}_{\sss j|i}^{(t)}+\Delta \hat{z}_{\sss j|i}^{(t+1)} & \textrm{ if $t\geq1$}. \label{eq:zhatupquant} 
  \end{array}\right.
\end{align}
Thus, the quantized $\hat{\v z}$ can then be constructed as
 \begin{align}\label{eq:tau}
   \hat{\v z}^{(t)}=\v z^{(0)} + \sum_{\tau=1}^{(t)}\Delta \hat{ {\v z}}^{(\tau)}.
 \end{align}
Às such, the quantized auxiliary variable $\hat{\v z}^{(t+1)}$ cannot be reconstructed by the adversary until $\v z^{(0)}$ is known.

The process of quantization will inevitably introduce distortion in the computations. Let $n_{\sss j|i}^{(t+1)}$ denote the introduced error when quantizing $\Delta z_{\sss j|i}^{(t+1)}$. We then have, by combining \eqref{eq:vz} and \eqref{eq:zhatupquant}, that
\begin{align} 
n_{\sss j|i}^{(t+1)} & = \Delta \hat{z}_{\sss j|i}^{(t+1)}-\Delta z_{\sss j|i}^{(t+1)}\nonumber\\
& = \hat{z}_{\sss j|i}^{(t+1)}-z_{\sss j|i}^{(t+1)}. \label{eq:noisezn}
\end{align}

When implementing quantization, i.e., $\Delta z\rightarrow \Delta \hat{z}$, we apply a popular technique in quantization called dithering \cite{schuchman1964dither} when implementing the quantization function $Q^{(t)}$, this ensure that the quantization error
is uniformly distributed over $[-\frac{\Delta^{(t)}}{2},\frac{\Delta^{(t)}}{2}]$, and is independent of $\Delta z_{\sss j|i}^{(t+1)}$, thus of $z_{\sss j|i}^{(t+1)}$. In the coming Section \ref{subsec.delta0} we will show how this property benefits the privacy analysis.

\subsection{Details of the proposed approach}
Putting things together, the proposed approach first applies subspace perturbation and then adopts adaptive differential quantization to quantize the difference variable before updating. More specifically, at $t=0$, subspace perturbation is applied by initializing $z_{\sss i|j}^{(0)}$ at every node $i$ with noise drawn from a distribution having large variance and sending it to neighbors $j\in {\cal N}_i$ via a securely encrypted channel. Each node $i$ then updates its local variable $x_{\sss i}^{(1)}$ and auxiliary variables $z_{\sss j|i}^{(1)}$ according to \eqref{eq:xup} and \eqref{eq:zup}, respectively,
after which $\Delta z_{\sss j|i}^{(1)}$ and $\Delta \hat{z}_{\sss j|i}^{(1)}$ are computed using \eqref{eq:vz} and \eqref{eq:vupquant}, respectively.  After exchanging
these quantities between neighboring nodes, $\Delta \hat{\v z}$ is computed as \eqref{eq:zhatupquant}.

For $t\geq 1$, after achieving the quantized $\hat{z}_{\sss i|j}^{(t)}$, each node $i$ then repeats the following updating steps:
\begin{align}\label{eq:xupL}
  &x_{\sss i}^{(t+1)} =\frac{s_i- \tsum_{\sss j \in {\cal N}_i} B_{\sss i|j}\hat{z}_{\sss i|j}^{(t)}}{1+cd_i} \\
  &\forall j \in {\cal N}_{\sss i}: z_{\sss j|i}^{(t+1)}=\theta \hat{z}_{\sss j|i}^{(t)} +(1-\theta)\left( \hat{z}_{\sss i|j}^{(t)}+2c B_{\sss i|j} x_i^{(t+1)}\right). \label{eq:zupquant}
\end{align}
Note that except transmitting the initialized $\v z^{(0)}$ all the communication channels do not require secure channel encryption when transmitting the quantized $\Delta \hat{\v z}$, since from  \eqref{eq:tau} we can see that the quantized $\hat{\v z}^{(t)}$ cannot be reconstructed unless the initialized $\v z^{(0)}$ is revealed (see Theorem \ref{thm:qsp} for detailed privacy proofs).

Algorithm \ref{alg:pro} summarizes the details of the proposed approach. 
In the coming sections we will analyze the performance of the proposed approach and show its relation to both SMPC and DP.  

\begin{algorithm}[t]
  \caption{Proposed ADQSP: privacy-preserving distributed average consensus via adaptive differentially quantized subspace perturbation}
  \label{alg:pro}
    \SetKwInOut{Input}{Input}
  \SetKwInOut{Output}{Output}
  \SetKwInOut{Initialization}{Initialization}
   \Initialization{Each node $i$ randomly initializes $z_{\sss i|j}^{(0)}$'s from independent Gaussian distributions\footnotemark  ~ ${\cal N}(0;\sigma_z^2)$ for all $j\in \mathcal{N}_i$.}
       \Input{Initialized auxiliary variables $\v z^{(0)}$, quantization parameters $\Delta^{(0)}$, $\gamma, \Delta_{\sss \mathrm{min}}, l$, and $t_{\max}$.
  } 
  \Output{Optimization solution: $x_{\sss i}^{(t_{\max})}$}
  \For{\texttt{$t=0,1,\ldots,t_{\max}-1$}} 
      {
      \If{$t=0$} 
      { 
      Receive $z_{\sss j|i}^{(0)}$ from $j\in {\cal N}_{\sss i}$ via secure channels \cite{dolev1993perfectly}.\\
      $x_{\sss i}^{(1)}\leftarrow$ \eqref{eq:xup}, $\{z_{\sss j|i}^{(1)}\}_{\sss j \in {\cal N}_{\sss i}} \leftarrow$ \eqref{eq:zup}. \
      }
  
      \Else
      { Send $\Delta \hat{z}_{\sss j|i}^{(t)}$ to $j \in {\cal N}_{\sss i}$ via non-secure channels. \\
       $\{\hat{z}_{\sss i|j}^{(t)}\}_{\sss j \in {\cal N}_{\sss i}} \leftarrow$ $ \eqref{eq:zhatupquant}$, $x_{\sss i}^{(t)} \leftarrow$  \eqref{eq:xupL}, \\
       $\{z_{\sss j|i}^{(t+1)}\}_{\sss j \in {\cal N}_{\sss i}}\leftarrow$ \eqref{eq:zupquant}. \
       }
       $\{\Delta \hat{z}_{\sss j|i}^{(t+1)}\}_{\sss j \in {\cal N}_{\sss i}}\leftarrow$ \eqref{eq:vupquant}$, \{\Delta z_{\sss j|i}^{(t+1)}\}_{\sss j \in {\cal N}_{\sss i}}\leftarrow$  \eqref{eq:vz}
      }
\end{algorithm}
\footnotetext{Note that except for being independent, each node can choose its own noise distributions and variances, as long as they are sufficiently large for privacy guarantee.} 
\section{Information-theoretical analysis of existing SMPC and DP approaches} \label{sec.smpc_dp}
Before showing that the proposed ADQSP approach can obtain both SMPC and (relaxed) DP performances, we first introduce and analyze two example approaches of SMPC and DP. In particular, we will give an information theoretical analysis of individual privacy. 
\subsection{Performance analysis of SMPC-based distributed average consensus}\label{sub.smpcave}
\subsubsection{Application of additive secret sharing in distributed average consensus}
Additive secret sharing is a popular SMPC technique. As an example, we now briefly explain how to apply additive secret sharing to distributed average consensus to achieve privacy-preservation \cite{gupta2017privacy,li2019privacyA}. The basic idea is to encrypt the private data before performing averaging using a traditional average consensus algorithm such as Algorithm \ref{alg:ave} or its broadcast alternative.
Let $\mathbb{Z}_p$ be the cyclic group of $p$ elements, represented by the integers $\{0,\ldots,p-1\}$. In order to apply additive secret sharing, we first need to transform all private data $s_i$ to integers in $\mathbb{Z}_p$, where negative numbers are represented by their modular additive inverse and floating point numbers are scaled up into integers.  In addition, $p$ should be sufficiently large such that $p\geq \tsum_{\sss i\in{\cal V}} s_i$.  Hence, discrete random variables are considered here. To apply additive secret sharing \cite{Cramer2015}, each node $i$ first chooses $d_i$ elements $\{r_i^j\in \mathbb{Z}_p\}_{\sss j\in {\cal N}_i}$ uniformly at random. It then sends $r_i^j$ to node $j\in{\cal N}_i$ (which requires secure channel encryption)
 and computes its own share as
\begin{align}\label{eq:last_share}
  r_i^i=s_i-\tsum_{\sss j\in {\cal N}_i}r_i^j ~\mathrm{mod}~ p,
\end{align}
where $r_i^i$ is a noisy version of the private data $s_i$.
Two key properties of additive secret sharing, which we will prove using mutual information in Proposition~\ref{prop.add}, are: 1) the secret can only be reconstructed when all shares are known; 2) no information about the hidden secret can be inferred as long as one share is missing. We have the following result. 
\begin{proposition}\label{prop.add} (Properties of additive secret sharing)
  Let $R_i^j$ be uniformly distributed in $\mathbb{Z}_p$ and let $ r_i^i$ be as in \eqref{eq:last_share}. Furthermore, let $k$ be an element in ${\cal N}_i'={\cal N}_i\cup \{i\}$. Then 
  \begin{align}
    &{\rm I}(S_i;\{ R_i^j\}_{\sss j\in {\cal N}_i'})={\rm I}(S_i;S_i). \label{eq:reconsAddi}\\
        &{\rm I}(S_i;\{ R_i^j\}_{\sss j\in {\cal N}_i'\setminus\{k\}})=0 \label{eq:nminus1Addi}
  \end{align}
\end{proposition}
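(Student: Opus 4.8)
The plan is to prove the two identities \eqref{eq:reconsAddi} and \eqref{eq:nminus1Addi} by direct manipulation of (discrete) entropies, exploiting the fact that the shares $\{R_i^j\}_{j\in\mathcal{N}_i}$ are independent and uniform on $\mathbb{Z}_p$, and that $r_i^i$ is determined by \eqref{eq:last_share}. Throughout I would work modulo $p$ and treat $S_i$ as a discrete random variable independent of the $R_i^j$'s (the randomness used to generate shares is chosen independently of the private data).

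For \eqref{eq:reconsAddi}: the key observation is that $\sum_{j\in\mathcal{N}_i'} R_i^j = r_i^i + \sum_{j\in\mathcal{N}_i} R_i^j \equiv S_i \pmod p$ by \eqref{eq:last_share}. Hence $S_i$ is a deterministic function of $\{R_i^j\}_{j\in\mathcal{N}_i'}$, so $H(S_i \mid \{R_i^j\}_{j\in\mathcal{N}_i'}) = 0$, and therefore ${\rm I}(S_i;\{R_i^j\}_{j\in\mathcal{N}_i'}) = H(S_i) - 0 = H(S_i) = {\rm I}(S_i;S_i)$, which is the claimed maximal leakage.

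For \eqref{eq:nminus1Addi}: by symmetry I would split into two cases depending on whether the missing index $k$ equals $i$ or lies in $\mathcal{N}_i$. If $k\in\mathcal{N}_i$, the revealed set is $\{r_i^i\}\cup\{R_i^j\}_{j\in\mathcal{N}_i\setminus\{k\}}$; I would show this collection is independent of $S_i$ by arguing that, conditioned on $S_i = s$, the vector $(\{R_i^j\}_{j\in\mathcal{N}_i\setminus\{k\}}, r_i^i)$ is uniformly distributed on $\mathbb{Z}_p^{d_i}$ — indeed the $R_i^j$ for $j\ne k$ are uniform and independent of $S_i$ by construction, and given those, $r_i^i = s - \sum_{j\ne k} R_i^j - R_i^k \bmod p$ is a shift of the uniform variable $R_i^k$, hence uniform and (jointly) independent of everything revealed. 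Since this joint distribution does not depend on $s$, the revealed set is independent of $S_i$, giving mutual information $0$. The case $k = i$ is even simpler: the revealed set is exactly $\{R_i^j\}_{j\in\mathcal{N}_i}$, which is independent of $S_i$ by assumption, so the mutual information is $0$.

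The main obstacle — really the only subtlety — is making the uniformity/independence argument in \eqref{eq:nminus1Addi} fully rigorous: one must be careful that "the missing share acts as a one-time pad" is stated at the level of the joint distribution of the revealed variables conditioned on $S_i$, not just marginally, and that the modular reduction genuinely produces a uniform variable (which is why $R_i^k$ uniform on the group $\mathbb{Z}_p$ is essential — a uniform variable on a subset would not work). I would handle this by exhibiting the explicit bijection $R_i^k \mapsto r_i^i$ for each fixed value of $s$ and the other shares, concluding that conditioning on $S_i$ leaves the joint law of the adversary's view unchanged.
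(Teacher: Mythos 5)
Your proposal is correct and follows essentially the same route as the paper: the first identity via the reconstruction formula \eqref{eq:last_share}, and the second via the case split on $k=i$ versus $k\in\mathcal{N}_i$ with the missing uniform share $R_i^k$ acting as a one-time pad on $\mathbb{Z}_p$. The only presentational difference is that the paper reduces the view by a bijection to ${\rm I}(S_i;S_i-R_i^k)$ before invoking uniformity, whereas you directly exhibit the conditional joint law of the revealed shares as uniform on $\mathbb{Z}_p^{d_i}$ independent of $s$ --- both arguments rest on the same fact and are equally valid.
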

\begin{proof}
See Appendix~\ref{pf.add}.
\end{proof}
In order to guarantee that adding noise to the original private data has no effect on the average result, each node $i$ adds all received shares $r_j^i$ from neighboring nodes to the share $r_i^i$. That is, node $i$ constructs data
$s'_i$ as
\begin{align}
  s_i'&=r_i^i+\tsum_{\sss j\in {\cal N}_i}r_j^i ~\mathrm{mod}~ p \nonumber\\
  &=s_i+\tsum_{\sss j\in {\cal N}_i}(r_j^i-r_i^j) ~\mathrm{mod}~ p \label{eq:siprim}
\end{align}
and uses $s'_i$ as input to Algorithm \ref{alg:ave} (broadcast type). After obtaining the average of $s'_i$, each node then constructs the final average output, denoted by $s_{\rm smpc}$, as $s_{\rm smpc}= (n\times s'_{\rm ave} ~\mathrm{mod}~ p)/n$.

\subsubsection{Output accuracy}
Since $\tsum_{\sss i\in {\cal V}} \tsum_{\sss j\in {\cal N}_i}(r_j^i-r_i^j)~\mathrm{mod}~ p=0$, the sum of  $s'_i$ is the same as the sum of $s_i$, i.e., 
\begin{align*}
 \tsum_{\sss i\in {\cal V}} s_i' ~\mathrm{mod}~ p &= \tsum_{\sss i\in {\cal V}} \big(s_i+\tsum_{\sss j\in {\cal N}_i}(r_j^i-r_i^j)\big)~\mathrm{mod}~ p\\&=\tsum_{\sss i\in {\cal V}} s_i
\end{align*}
Thus, the MSE is then given by 
\begin{align*}
   e_{\rm smpc}&=(s_{\rm smpc}-s_{\rm ave})^2\\
   &=(\frac{ \tsum_{\sss i\in {\cal V}} s_i' ~\mathrm{mod}~ p} {n}-\frac{\tsum_{\sss i\in {\cal V}} s_i}{n})^2=0.
\end{align*}
Hence, output accuracy is not affected by applying additive secret sharing. 
\subsubsection{Individual privacy}
To analyze privacy, we need to first specify the view of the adversaries. In this analysis, we consider a scenario where each honest node has at least one corrupt neighbor. In that case, the view of the adversaries is given by
  \begin{align}
    {\cal O}_{\sss \rm SMPC,{\cal V}_c} = &\{S_j\}_{\sss j\in {\cal V}_c}\cup \{R_{\sss j}^{k},R_k^j\}_{\sss (j,k)\in \mathcal{E}_c}\cup \{X^{(t)}\}_{t\geq 1} \nonumber.
  \end{align}
With \eqref{eq:xup} and \eqref{eq:zup} we have that 
\begin{align}
   &x_i^{(t+3)}-2\theta x_i^{(t+2)}+(2\theta-1) x_i^{(t+1)} \nonumber \\&
   = \frac{-\tsum_{\sss j\in{\cal N}_i} B_{\sss i|j}( z_{\sss i|j}^{(t+3)}-2\theta z_{\sss i|j}^{(t+2)}+(2\theta-1)z_{\sss i|j}^{(t+1)}) }{1+cd_i}\nonumber\\
    &= \frac{-\tsum_{\sss j\in{\cal N}_i} 2c(1-\theta) ( (1-\theta)x_i^{(t+1)}+\theta x_j^{(t+1)}-x_j^{(t+2)}) }{1+cd_i}. \label{eq:xtplus2}
\end{align}
Hence, the adversary can compute  $\v x^{(t)}$ for $t\geq 3$ using the first two iterations ($t=1,2$).
We have the following result. 

\begin{theorem}\label{thm:additive_privacy} (Information loss of SMPC protocol)
Assume that each honest node has at least one corrupt neighbor. 
Then computing averaging using inputs $s_i'$
given by \eqref{eq:siprim}, the adversaries 
can infer the following information about an arbitrary honest node $i\in {\cal V}_h$:
\begin{align}
& {\rm I}(S_i;{\cal O}_{\sss \rm SMPC,{\cal V}_c}) 
   ={\rm I}(S_i; \tsum_{\sss j \in {\cal V}_{\sss h,1}} S_j).\nonumber 
\end{align}
\end{theorem}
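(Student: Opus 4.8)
The plan is to show a two-way inclusion of information: the adversary's view in the SMPC protocol yields exactly the partial sum $\tsum_{j\in{\cal V}_{h,1}}S_j$ about $S_i$, neither more nor less. The ``no more'' direction is the heart of the argument; the ``no less'' direction follows from Proposition~\ref{prop:smpc} (or from \eqref{eq:reidealSMPC}), since any perfectly accurate protocol must leak the component sums. Concretely, I would first argue that the entire adversarial view ${\cal O}_{\sss\rm SMPC,{\cal V}_c}$ can be reconstructed (as a random variable, up to a measurable bijection) from the ``input'' data $\{S_j\}_{j\in{\cal V}_c}$, the cross-edge shares $\{R_j^k,R_k^j\}_{(j,k)\in\mathcal{E}_c}$, and the two iterates $\v x^{(1)},\v x^{(2)}$ — because, as the excerpt already notes via \eqref{eq:xtplus2}, all later iterates $\v x^{(t)}$, $t\geq 3$, are deterministic functions of $\v x^{(1)}$ and $\v x^{(2)}$. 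So it suffices to analyze ${\rm I}(S_i;\{S_j\}_{{\cal V}_c},\{R_j^k,R_k^j\}_{\mathcal{E}_c},\v x^{(1)},\v x^{(2)})$.

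Next I would reduce $\v x^{(1)},\v x^{(2)}$ to something cleaner. The iterates $\v x^{(1)},\v x^{(2)}$ are (invertible affine) functions of $\v z^{(0)},\v z^{(1)}$, and together with the update equations the pair determines, and is determined by, the collection of perturbed inputs $\{s_j'\}_{j\in{\cal V}}$ together with the auxiliary-variable randomness $\v z^{(0)}$ restricted to the relevant subspace. The key structural point is: after removing the contribution of the corrupt-incident shares (which the adversary knows exactly), what the adversary learns about the honest inputs through the running of Algorithm~\ref{alg:ave} on $\{s_j'\}$ is precisely the sums of $s_j'$ over each honest component $\mathcal{G}_{h,k}$. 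This is exactly the content of Proposition~\ref{prop:smpc} applied to the vanilla consensus algorithm with inputs $s_j'$: the adversary learns $\{\tsum_{j\in{\cal V}_{h,k}}s_j'\}_k$ and nothing more about the individual $s_j'$. (One has to be slightly careful that subspace-type auxiliary randomness doesn't leak extra information here — but in this SMPC variant $\v z^{(0)}$ is not perturbed, so this is the standard consensus-leakage statement.)

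Then I would assemble the pieces. Fix $i\in{\cal V}_{h,1}$. On one hand, $S_i\perp\!\!\!\perp$ everything the adversary has except through $S_i$'s appearance in $s_i'=s_i+\tsum_{j\in{\cal N}_i}(R_j^i-R_i^j)$, hence through the component sum $\tsum_{j\in{\cal V}_{h,1}}S_j'$. Writing $\tsum_{j\in{\cal V}_{h,1}}S_j' = \tsum_{j\in{\cal V}_{h,1}}S_j + (\text{terms in the cross-edge shares }R)$, and noting that the relevant cross-edge shares $R_j^k,R_k^j$ for edges leaving ${\cal V}_{h,1}$ are all in the adversary's view, the adversary can subtract them off and recover exactly $\tsum_{j\in{\cal V}_{h,1}}S_j$. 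Using Assumption~\ref{asu.1} (independence of the $S_j$) and Proposition~\ref{prop.add} (the within-component shares $R_j^i,R_i^j$ for $j,i$ both honest carry no information about $S_i$ once one is missing, which holds because each honest node has a corrupt neighbor so at least one of its shares is ``used up'' — this is the role of the standing assumption), a chain-rule / data-processing computation gives ${\rm I}(S_i;{\cal O}_{\sss\rm SMPC,{\cal V}_c})={\rm I}(S_i;\tsum_{j\in{\cal V}_{h,1}}S_j)$. For the reverse inequality, Proposition~\ref{prop:smpc} guarantees the adversary does learn $\tsum_{j\in{\cal V}_{h,1}}S_j$, so the mutual information is at least ${\rm I}(S_i;\tsum_{j\in{\cal V}_{h,1}}S_j)$; combined with the upper bound this is equality.

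The main obstacle I anticipate is the bookkeeping in the middle step: carefully showing that the joint view $(\{S_j\}_{{\cal V}_c},\{R\}_{\mathcal{E}_c},\v x^{(1)},\v x^{(2)})$ contains \emph{exactly} the honest-component sums of $\{s_j'\}$ and no residual information about individual honest inputs — i.e., that there is no extra leakage hiding in the auxiliary variables $\v z^{(0)},\v z^{(1)}$ or in the precise values of $\v x^{(1)},\v x^{(2)}$ beyond what Proposition~\ref{prop:smpc} accounts for. This requires identifying which linear combinations of $\v z^{(0)}$ are observable to the adversary and checking that, conditioned on the component sums, $S_i$ is independent of all of them; the uniform-randomness and independence properties of the secret shares (Proposition~\ref{prop.add}) together with Assumption~\ref{asu.1} are what make this go through, but making the conditioning rigorous is the delicate part.
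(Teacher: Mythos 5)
Your overall route is the paper's: reduce the view to $(\{S_j\}_{j\in{\cal V}_c},\{R_j^k,R_k^j\}_{(j,k)\in\mathcal{E}_c},\v x^{(1)},\v x^{(2)})$ via \eqref{eq:xtplus2}, exploit the bijection between $\v x^{(1)}$ and $\{s_j'\}_{j\in{\cal V}}$ (valid because $\v z^{(0)}=\v 0$ here), subtract off the cross-edge shares the adversary already holds, and restrict to the component ${\cal V}_{h,1}$ by independence. However, there is a genuine gap at the decisive step, and one of your intermediate claims is false as stated. The consensus run does \emph{not} reveal ``precisely the sums of $s_j'$ over each honest component and nothing more about the individual $s_j'$'': since every honest node has a corrupt neighbor and $x_j^{(1)}=s_j'/(1+cd_j)$, the adversary learns every individual $s_j'$ exactly. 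Proposition~\ref{prop:smpc} cannot support the ``nothing more'' direction in any case, since it is only a lower bound on leakage. What actually must be shown is that the collection of masked honest inputs $\{S_j+\tsum_{k\in{\cal N}_{j,h}}(R_k^j-R_j^k)\}_{j\in{\cal V}_{h,1}}$, each of which the adversary holds individually, carries exactly ${\rm I}(S_i;\tsum_{j\in{\cal V}_{h,1}}S_j)$ of information about $S_i$.

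That statement does not follow from Proposition~\ref{prop.add}, which concerns the shares of a single node; here the masks of different nodes in the component are correlated (they sum to zero over ${\cal V}_{h,1}$), which is precisely why the component sum leaks while nothing else does. The paper isolates this as Lemma~\ref{lm.linear}: for independent $X_j,R_j$ with ${\rm I}(X_j;X_j+R_j)=0$, the tuple $(X_1+R_1,\ldots,X_n+R_n,\tsum_j R_j)$ is, via an explicit triangular bijective linear map and a Markov-chain argument, informationally equivalent to $\tsum_j X_j$ as far as any single $X_i$ is concerned. Your invoked ``chain-rule / data-processing computation'' is exactly this lemma; you correctly flag it as the delicate part but do not supply it, and without it the upper-bound direction of the theorem is unproven. (Your lower bound via Proposition~\ref{prop:smpc} is fine, though the paper obtains equality directly from the lemma and does not need a separate two-sided argument.)
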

\begin{proof}
See Appendix ~\ref{lmpf.linearSP} and \ref{thmpf.add}.
\end{proof}

Hence, we conclude that the above SMPC approach achieves perfect output accuracy and the individual privacy reaches the bound \eqref{eq:reidealSMPC} in the redefined ideal world, i.e., 
\begin{align}\label{eq:perfSMPC}
  \left( e_{\sss \rm SMPC}=0,{\rm I}(S_i;{\cal O}_{\sss \rm SMPC,{\cal V}_c})={\rm I}(S_i; \tsum_{\sss j \in {\cal V}_{\sss h,1}} S_j)\right).
\end{align}

\subsection{Performance analysis of DP-based distributed average consensus}\label{sec:DP}

In this section, we describe how DP-like methods \cite{kefayati2007secure,huang2012differentially,nozari2017differentially} can be applied in average consensus. Specifically, this approach falls under \emph{local differential privacy} (LDP) \cite{kasiviswanathan2011can}, in which there is no centralized server available and each node considers all other nodes to be corrupt. The most straightforward way to incorporate LDP in distributed average consensus is via \emph{local perturbation}: every node $i$ draws a random noise $r_i$ from some predetermined distribution, and adds it to its private data to obtain perturbed data, denoted as 
\begin{align} \label{eq:dpsr}
  \forall i\in {\cal V}: ~\tilde{s}_i = s_i+r_i
\end{align}
An average consensus protocol is then run on the perturbed data $\tilde{s}_i$ instead of the private data $s_i$.

\subsubsection{Output accuracy}
Since distributed average consensus protocol is performed on $\tilde{s}_i$, its output is then given by $\frac{1}{n}\sum_i \tilde{s}_i$, so the MSE  \eqref{eq:MSE} is given by 
\begin{align}\label{eq:edp}
   {e_{\sss \rm DP}} =\frac{1}{n}\sum_{i=1}^n (\tilde{s}_i-s_i)^2=\frac{1}{n}\sum_{i=1}^n r_i^2. 
\end{align}

\subsubsection{Individual privacy}
As explained before,  mutual information is a relaxed version of $\epsilon$-differential privacy \cite{cuff2016differential,barthe2011information}. For consistency with the rest of this paper, here we also measure privacy leakages of DP approaches via mutual information. Therefore, we refer that our approach achieves performances of relaxed DP approaches.  
DP assumes that all nodes other than the considered node $i$ are corrupt. In particular, the DP adversary knows $\tilde{s}_j$ for $j \neq i$ and, from the output of the average consensus protocol, also $\tfrac{1}{n}\sum_j \tilde{s}_j$. It follows that the DP adversary can always deduce $\tilde{s}_i$, and so individual privacy is given by 
\begin{align} \label{eq.DPpriv}
  {\rm I}(S_i; {\cal O}_{\sss \rm DP,{\cal V}\setminus\{i\}})={\rm I}(S_i;S_i+R_i).
\end{align}
Existing mechanisms give bounds to this, under assumptions of the distribution of the $S_i$. For instance, if it is known that $s_i \in [\alpha,\alpha+M]$ for $\alpha,M \in \mathbb{R}$, then taking $r_i$ to follow the Laplace distribution with parameter $M/\varepsilon$ ensures that \eqref{eq.DPpriv} is bounded by $\varepsilon$ \cite{dwork2006}. Note that all noises $R_i$ are independent of each other and their variances, denoted $\sigma^2_{r_i}$, is given by $\sigma^2_{r_i}=2M^2/\varepsilon^2$. 
 Given \eqref{eq:edp}, the variance of $E_{\sss \rm DP}$ is thus 
\begin{align} 
\frac{1}{n^2}\sum_{i=1}^n \sigma^2_{r_i}= \frac{2M^2}{n\varepsilon^2}.
\end{align}
This shows that the more privacy we demand, i.e., the lower $\varepsilon$ is, the less accurate the output average will be. Hence, there is a \emph{privacy-accuracy trade-off} in DP approaches.

Overall, we conclude the output accuracy and individual privacy of a DP based approach is given by 
\begin{align}\label{eq:perfDP}
    \left( {e_{\sss \rm DP}} =\frac{1}{n}\sum_{i=1}^n r_i^2; \quad {\rm I}(S_i; {\cal O}_{\sss \rm DP,{\cal V}\setminus\{i\}})={\rm I}(S_i;S_i+R_i) \right)
\end{align}

\section{Connection of Proposed approach and existing SMPC and DP approaches} \label{sec.prop2exist} 
We first analyze the output accuracy and individual privacy of the proposed approach and then explain how it connects to existing SMPC and DP approaches. 
\subsection{Output accuracy}
As proved in \cite{schellekens2017quantisation,jonkman2018quantisation,jane2022elsevier}, the output accuracy is dependent on the parameter of $\Delta_{\sss \mathrm{min}}$, i.e., minimum quantization cell-width. 
Let $r_{i, \sss \mathrm{min}}$ denote the residual error in the output of node $i$  compared to the true average $ s_{\sss \mathrm{ave}}$, so that the MSE of the proposed approach is given by 
\begin{align}\label{eq:mseADQSP}
  e_{\sss \rm ADQSP}=\frac{1}{n} \sum_{i\in {\cal V}} r^2_{i, \sss \mathrm{min}}.
\end{align} We have that
\begin{align}
 r_{i, \sss \mathrm{min}}&= x_{\sss i}^{(t)}-s_{\sss \mathrm{ave}}\nonumber\\
 &\stackrel{(a)}{=}\frac{s_i- \tsum_{\sss j \in {\cal N}_i} B_{\sss i|j}\hat{z}_{\sss i|j}^{(t-1)}}{1+cd_i} -s_{\sss \mathrm{ave}}\nonumber \\
 &\stackrel{(b)}{=}\frac{s_i- \tsum_{\sss j \in {\cal N}_i} B_{\sss i|j}z_{\sss i|j}^{(t-1)}}{1+cd_i} -\frac{\tsum_{\sss j \in {\cal N}_i} B_{\sss i|j}n_{\sss i|j}^{(t-1)}}{1+cd_i}-s_{\sss \mathrm{ave}},
  \label{eq:infty}
\end{align}
where (a) uses \eqref{eq:xupL} and (b) uses \eqref{eq:noisezn}. Since the first term in the right-hand side of \eqref{eq:infty} equals $s_{\sss \mathrm{ave}}$ as $t\rightarrow \infty$, we conclude that
\[
r_{i, \sss \mathrm{min}} \rightarrow -\frac{\tsum_{\sss j \in {\cal N}_i} B_{\sss i|j}n_{\sss i|j}^{(t-1)}}{1+cd_i} \quad\text{as } t\to\infty.
\]
Note that for the special case where $\Delta_{\sss \mathrm{min}}=0$, we have that $n_{\sss i|j}^{(t)}\to 0$ as $t\rightarrow \infty$ 
and thus $r^2_{i, \sss \mathrm{min}}\to 0$, hence perfect output accuracy is guaranteed.

\subsection{Individual privacy}


With the adopted quantization scheme, the total collection of information transmitted over the network is $\{z_{\sss i|j}^{(0)}\}_{\sss (i,j)\in \mathcal{E}},\{\Delta \hat{z}_{\sss i|j}^{(t)}\}_{\sss (i,j)\in \mathcal{E}, t\geq 1}$. Of these, only the initialized $\{z_{\sss i|j}^{(0)}\}_{\sss (i,j)\in \mathcal{E}}$ are transmitted over a secure channel. Thus, the eavesdropping adversary has the knowledge of $\{\Delta \hat{z}_{\sss j|k}^{(t)}\}_{\sss (j,k)\in \mathcal{E}, t\geq 1}$
The passive adversary has the knowledge of $\{S_j\}_{\sss j\in {\cal V}_c}\cup\{Z_{\sss j|k}^{(0)},\Delta \hat{Z}_{\sss j|k}^{(t)}\}_{\sss (j,k)\in \mathcal{E}_c, t\geq 1}$ and hence combining the knowledge of these gives that the view of an adversary in our proposed algorithm is
\begin{align*}
  {\cal O}_{\sss \rm ADQSP,{\cal V}_c} = \{S_j\}_{\sss j\in {\cal V}_c} \cup \{Z_{\sss j|k}^{(0)}\}_{\sss (j,k)\in \mathcal{E}_c}\cup \{\Delta \hat{Z}_{\sss j|k}^{(t)}\}_{\sss (j,k)\in \mathcal{E}, t\geq 1} ).
\end{align*}
In what follows we will show that the proposed approach can achieve both  SPMC and DP performances under different parameter settings.

\subsection{Proposed approach achieves SMPC performances by setting $\Delta_{ \mathrm{min}}=0$} \label{subsec.delta0}
We will prove our claim via two main results: 1) the proposed approach obtains similar properties as the SMPC technique, i.e., additive secret sharing; 2) The proposed approach achieves the same performances as the SMPC approach.

We remark that the initialized $ z_{\sss i|j}^{(0)}$'s together with $x_i^{(1)}$ can be considered as shares of $s_i$, similar to the additive secret sharing scheme. In the following, we prove that it also satisfies two key properties of additive secret sharing scheme, similar to Proposition \ref{prop.add}. 
\begin{proposition}\label{prop:share_SubspaceP}
(The proposed ADQSP satisfies two properties required for additive secret sharing.)
Let $Z_{\sss i|j}^{(0)}\sim{\cal N}(0,\sigma_z^2 )$ and let $ x_i^{(1)}=\frac{ s_i- \tsum_{\sss j \in {\cal N}_i} B_{\sss i|j} z_{\sss i|j}^{(0)}}{1+cd_i}$. Furthermore, let $k$ be an element in ${\cal N}_i$ and denote the variance of $S_i$ by $\sigma^2_s$. Then 
  \begin{align}
    {\rm I}(S_i;\{ Z_{\sss i|j}^{(0)}\}_{\sss j\in {\cal N}_i\setminus\{k\}},X_i^{(1)})&\leq \frac{1}{2}\log(1+\frac{\sigma_s^2}{\sigma_z^{2}})\\
    {\rm I}(S_i;\{ Z_{\sss i|j}^{(0)}\}_{\sss j\in {\cal N}_i},X_i^{(1)})&={\rm I}(S_i;S_i).
  \end{align}
  For $\sigma_z^2 \rightarrow \infty$ we obtain 
  \begin{align}\label{eq:inpSP}
    \lim_{\sigma_z^2 \rightarrow \infty}{\rm I}(S_i;\{ Z_{\sss i|j}^{(0)}\}_{\sss j\in {\cal N}_i\setminus\{k\}},X_i^{(1)})= 0.
  \end{align}
\end{proposition}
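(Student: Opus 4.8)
The plan is to reduce all three claims to standard Gaussian mutual-information computations. Start from the defining relation $X_i^{(1)} = \frac{1}{1+cd_i}\bigl(S_i - \sum_{j\in{\cal N}_i} B_{\sss i|j} Z_{\sss i|j}^{(0)}\bigr)$, and observe that the collection $\{Z_{\sss i|j}^{(0)}\}_{j\in{\cal N}_i\setminus\{k\}}$ is independent of $S_i$ (by construction, the initialization noise is drawn independently of the private data). Hence conditioning the adversary's view on these known-and-independent components removes them without loss, and the chain rule gives
\begin{align*}
{\rm I}\bigl(S_i;\{Z_{\sss i|j}^{(0)}\}_{j\in{\cal N}_i\setminus\{k\}},X_i^{(1)}\bigr)
= {\rm I}\bigl(S_i; X_i^{(1)} \,\bigm|\, \{Z_{\sss i|j}^{(0)}\}_{j\in{\cal N}_i\setminus\{k\}}\bigr).
\end{align*}
Given all but one of the $Z_{\sss i|j}^{(0)}$'s, the quantity $X_i^{(1)}$ is an invertible affine function of $S_i - B_{\sss i|k} Z_{\sss i|k}^{(0)}$, so the conditional mutual information equals ${\rm I}(S_i; S_i + W)$ where $W = -B_{\sss i|k} Z_{\sss i|k}^{(0)}$ is Gaussian with variance $\sigma_z^2$ (since $B_{\sss i|k}\in\{\pm1\}$) and independent of $S_i$.

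\textbf{Bounding the first claim.} For the inequality I would invoke the maximum-entropy property of the Gaussian: among all distributions with variance $\sigma_s^2$, the Gaussian maximizes differential entropy, and ${\rm I}(S_i; S_i+W) = h(S_i+W) - h(W)$ is maximized (over $S_i$ with fixed variance $\sigma_s^2$, $W$ fixed Gaussian) when $S_i$ is Gaussian, yielding the bound $\frac12\log\!\bigl(1 + \sigma_s^2/\sigma_z^2\bigr)$. One should state the mild regularity assumption that $h(S_i)$ exists; if $S_i$ need not be Gaussian, the cleaner route is the well-known fact ${\rm I}(S_i; S_i + W) \le \frac12 \log(1 + \sigma_s^2/\sigma_z^2)$ for independent Gaussian $W$, which is exactly this maximum-entropy argument. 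The limit \eqref{eq:inpSP} is then immediate: as $\sigma_z^2\to\infty$, $\frac12\log(1+\sigma_s^2/\sigma_z^2)\to 0$, and since mutual information is nonnegative it is squeezed to $0$.

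\textbf{The reconstruction identity.} For the second equation, note that once \emph{all} of $\{Z_{\sss i|j}^{(0)}\}_{j\in{\cal N}_i}$ are known, $X_i^{(1)}$ together with these noise terms determines $S_i$ exactly via $S_i = (1+cd_i)X_i^{(1)} + \sum_{j\in{\cal N}_i} B_{\sss i|j} Z_{\sss i|j}^{(0)}$. Thus $S_i$ is a deterministic function of the conditioning variables, so ${\rm I}(S_i;\{Z_{\sss i|j}^{(0)}\}_{j\in{\cal N}_i},X_i^{(1)}) = {\rm I}(S_i; S_i) = h(S_i)$ (or $H(S_i)$ in the discrete case), matching Proposition~\ref{prop.add}.

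\textbf{Main obstacle.} The routine parts are the chain-rule bookkeeping and the Gaussian entropy bound; the only genuinely delicate point is justifying that the affine map is invertible and that the relevant differential entropies exist — in particular handling the case where $S_i$ has unbounded or undefined differential entropy. I would address this by phrasing the first bound purely in terms of the additive-Gaussian-noise channel capacity inequality (which holds for any $S_i$ with finite variance, without needing $h(S_i)$ itself to be finite), so that no existence caveat is needed for the inequality, and only the equality and limit require the mild assumption that $I(S_i;S_i)$ is well-defined, which is already implicit in the statement.
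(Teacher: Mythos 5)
Your proof is correct and follows essentially the same route as the paper's: reduce the adversary's view to ${\rm I}(S_i;S_i- B_{\sss i|k}Z_{\sss i|k}^{(0)})$ via invertibility and independence, bound it by the Gaussian maximum-entropy (AWGN) argument, squeeze the limit by nonnegativity, and get the equality from exact reconstruction of $s_i$. Your explicit remark on the existence of the differential entropies is a welcome addition, and your final bound $\tfrac12\log(1+\sigma_s^2/\sigma_z^2)$ matches the proposition statement (the paper's appendix drops the factor $\tfrac12$ in its last display, which appears to be a typo).
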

\begin{proof}
See Appendix~\ref{pf.share_SubspaceP}.
\end{proof}

We now proceed to the main results of individual privacy. 
\begin{theorem} \label{thm:qsp} (Upper and lower bound of the information loss of the proposed ADQSP approach when $\Delta_{\sss \mathrm{min}}=0$.)
Assume that all nodes has at least one corrupt neighbor, the information that the adversaries can infer about an arbitrary honest node $i\in {\cal V}_h$ is upper bounded by: 
\begin{align}\label{eq:spUpper1}
  &{\rm I}(S_i;{\cal O}_{\sss \rm ADQSP,{\cal V}_c})\leq \nonumber
  \\&{\rm I}(S_i;\{S_j -\tsum_{\sss k \in {\cal N}_{\sss j,h}} B_{\sss j|k} Z_{\sss j|k}^{(0)}\}_{\sss j\in {\cal V}_{\sss h,1}}, \{Z_{\sss j|k}^{(0)}-Z_{\sss k|j}^{(0)}\}_{\sss j,k \in {\cal V}_{\sss h,1}} ),
\end{align}
assuming $\sigma_z^2 \rightarrow \infty$, the above becomes 
\begin{align}\label{eq:spUpper2}
  &{\rm I}(S_i;{\cal O}_{\sss \rm ADQSP,{\cal V}_c})\leq {\rm I}(S_i;\tsum_{\sss j\in {\cal V}_{\sss h,1}}S_j).
\end{align}
Assuming $t_{\max} \rightarrow \infty$, the information loss is also lower bounded by: 
 \begin{align}\label{eq:spLower1}
     {\rm I}(S_i; {\cal O}_{\sss \rm ADQSP,{\cal V}_c})
    &\geq I(S_i;\tsum_{\sss j \in {\cal V}_{h,1}} S_j).
 \end{align}
\end{theorem}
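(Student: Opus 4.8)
The plan is to prove Theorem~\ref{thm:qsp} in three parts, corresponding to the three displayed bounds.

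\textbf{Upper bound \eqref{eq:spUpper1}.} First I would argue that the entire view ${\cal O}_{\sss \rm ADQSP,{\cal V}_c}$ can be post-processed into (and hence carries no more information than) a reduced collection of random variables living inside the honest component ${\cal V}_{\sss h,1}$ containing node $i$. The key observations: (i) because the dithered quantization errors $n_{\sss j|k}^{(t)}$ are uniform, independent of $\Delta z_{\sss j|k}^{(t)}$ and hence of $z_{\sss j|k}^{(t)}$, the transmitted increments $\Delta\hat z_{\sss j|k}^{(t)}$ for $t\geq 1$ give the adversary nothing beyond what running the \emph{unquantized} protocol would give (a data-processing / independent-noise argument, invoking the dithering property from Section~\ref{subsec:Quantized}); (ii) for the unquantized linear recursion, the argument behind Theorem~\ref{thm:additive_privacy} and Proposition~\ref{prop:smpc} (the linear-subspace-perturbation lemma in Appendix~\ref{lmpf.linearSP}) shows that everything the adversary sees about the honest nodes reduces to the ``input-like'' quantities $S_j-\tsum_{k\in{\cal N}_{j,h}}B_{j|k}Z_{j|k}^{(0)}$ and the ``share-difference'' quantities $Z_{j|k}^{(0)}-Z_{k|j}^{(0)}$ for $j,k\in{\cal V}_{\sss h,1}$; (iii) information coming from other honest components or from corrupt nodes is independent of $S_i$ given these, by Assumption~\ref{asu.1}. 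Chaining these via the data-processing inequality yields \eqref{eq:spUpper1}.

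\textbf{Upper bound \eqref{eq:spUpper2} in the limit $\sigma_z^2\to\infty$.} Here I would take the right-hand side of \eqref{eq:spUpper1} and show it converges to ${\rm I}(S_i;\tsum_{j\in{\cal V}_{\sss h,1}}S_j)$. The idea is exactly the mechanism of Proposition~\ref{prop:share_SubspaceP}: summing the ``input-like'' quantities over $j\in{\cal V}_{\sss h,1}$, the interior noise terms telescope/cancel against the edge-difference terms (each internal edge contributes $Z_{j|k}^{(0)}$ with opposite signs on its two endpoints, matching a $Z_{j|k}^{(0)}-Z_{k|j}^{(0)}$ term up to the known incidence signs), so the collection is an invertible linear function of $\tsum_{j\in{\cal V}_{\sss h,1}}S_j$ plus a large-variance Gaussian ``cloud'' in the complementary directions; as $\sigma_z^2\to\infty$ the latter becomes uninformative about $S_i$ while the former is preserved, and a conditioning/continuity argument (as in the proof of \eqref{eq:inpSP}) gives the limit.

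\textbf{Lower bound \eqref{eq:spLower1}.} For the converse I would show the adversary can actually reconstruct $\tsum_{j\in{\cal V}_{\sss h,1}}S_j$ from ${\cal O}_{\sss \rm ADQSP,{\cal V}_c}$ as $t_{\max}\to\infty$, so that ${\rm I}(S_i;{\cal O}_{\sss \rm ADQSP,{\cal V}_c})\geq {\rm I}(S_i;\tsum_{j\in{\cal V}_{\sss h,1}}S_j)$ by monotonicity of mutual information under the deterministic map. Since $\Delta_{\sss \mathrm{min}}=0$ we have $n_{\sss i|j}^{(t)}\to 0$, the protocol converges to the true average $s_{\sss\mathrm{ave}}$, and by Assumption~\ref{asu.2} the adversary therefore learns $\tsum_{j\in{\cal V}}S_j$; subtracting the known $\{S_j\}_{j\in{\cal V}_c}$ and the partial sums of the other honest components (which Proposition~\ref{prop:smpc} guarantees are leaked) isolates $\tsum_{j\in{\cal V}_{\sss h,1}}S_j$. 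Combined with \eqref{eq:spUpper2}, this pins the leakage exactly at ${\rm I}(S_i;\tsum_{j\in{\cal V}_{\sss h,1}}S_j)$, matching the redefined-ideal-world bound \eqref{eq:reidealSMPC} and establishing the SMPC-equivalence claim.

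\textbf{Main obstacle.} The hardest step is (i)–(ii) of the upper bound: making precise that the \emph{infinite} stream of dithered quantized increments $\{\Delta\hat Z_{\sss j|k}^{(t)}\}_{t\geq 1}$, together with the non-linear clamping inherent in finite-bit quantization, collapses — via data processing — onto the same finite sufficient statistic as the idealized linear protocol. One must handle the dependence introduced because later increments depend on earlier \emph{quantized} values (not the true ones), and verify that the dithering independence propagates through the recursion; this is where Appendix~\ref{lmpf.linearSP} and the independent-dither property do the real work, and where I would spend the bulk of the effort.
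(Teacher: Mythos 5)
Your two upper-bound arguments follow essentially the same route as the paper's Appendix~\ref{thmpf:qspU}: a data-processing step replacing the dithered increments $\Delta\hat{Z}^{(t)}$ by the unquantized differences, a reduction of the whole trajectory to the first-iteration quantities via the linear recursion, bijective linear re-parametrizations and independence across honest components to land on the right-hand side of \eqref{eq:spUpper1}, and then the asymptotic-independence lemma (Lemma~\ref{lm.xr}) combined with Lemma~\ref{lm.linear} for the $\sigma_z^2\to\infty$ limit \eqref{eq:spUpper2}. One small mismatch: the reduction in the paper is a bespoke chain of equalities built from the ADQSP recursions (steps (b)--(f), using \eqref{eq:zt} and \eqref{eq:difzij0}), not an invocation of the machinery behind Theorem~\ref{thm:additive_privacy}; Lemma~\ref{lm.linear} only enters at the very last limiting step. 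Your identified ``main obstacle'' is real and is exactly where the paper is also at its loosest, so you have located the right pressure point even if you do not resolve it.

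Your lower bound, however, takes a genuinely different route. The paper (Appendix~\ref{thmpf:qspL}) is constructive: it exhibits explicit linear combinations \eqref{eq:mix1}--\eqref{eq:mix2} of the adversary's observations that telescope to $\tsum_{\sss j\in{\cal V}_{\sss h,1}}S_j$ plus quantization-noise terms $N^{(t_{\max})}$, giving a finite-$t_{\max}$ lower bound whose limit is immediate once $N^{(t_{\max})}\to 0$. You instead invoke Proposition~\ref{prop:smpc} as a black box on the converged protocol. This is more conceptual and makes the link to the impossibility result explicit, but it carries a gap the paper's version avoids: Proposition~\ref{prop:smpc} applies only to protocols that output the \emph{exact} sum, whereas ADQSP with $\Delta_{\sss\mathrm{min}}=0$ is exact only in the limit $t_{\max}\to\infty$. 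For every finite $t_{\max}$ your argument yields nothing, so you need an additional continuity/limit-exchange argument for the mutual information (or a statement of Proposition~\ref{prop:smpc} robust to vanishing output error) to conclude \eqref{eq:spLower1}. The paper's explicit construction is precisely what sidesteps this, and it is also reused for Theorem~\ref{thm.dp}, which your route would not supply.
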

\begin{proof}
See Appendix~\ref{thmpf:qspU} for proof of \eqref{eq:spUpper1} and \eqref{eq:spUpper2}, Appendix~\ref{thmpf:qspL} for proof of \eqref{eq:spLower1}.
\end{proof}



Overall,  we conclude that when $\Delta_{ \mathrm{min}}=0$ the output accuracy and individual privacy of the proposed ADQSP protocol is given by 
\begin{align} \label{eq:perfADQSP1}
 \left( e_{\sss \rm ADQSP}=0; {\rm I}(S_i;{\cal O}_{\sss \rm ADQSP,{\cal V}_c})= {\rm I}(S_i;\tsum_{\sss j\in {\cal V}_{\sss h,1}}S_j)
.\right)
\end{align}
which is identical to the performance of SMPC approach \eqref{eq:perfSMPC}. 

\begin{figure*}[ht]
\begin{subfigure}{0.32\textwidth}
\includegraphics[width=0.9\linewidth]{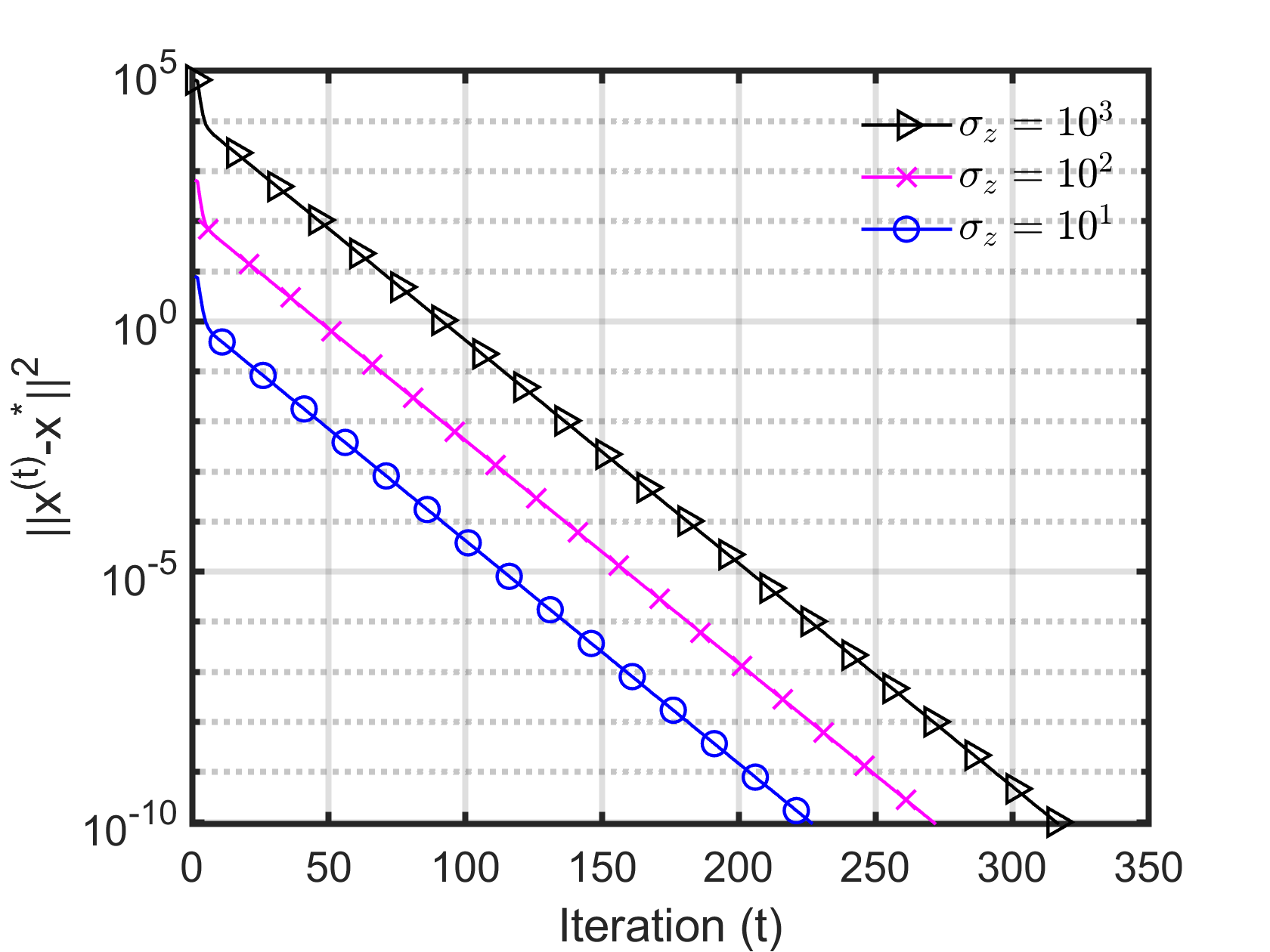} 
\caption{$\theta=0$}
\end{subfigure}
\begin{subfigure}{0.32\textwidth}
\includegraphics[width=0.9\linewidth]{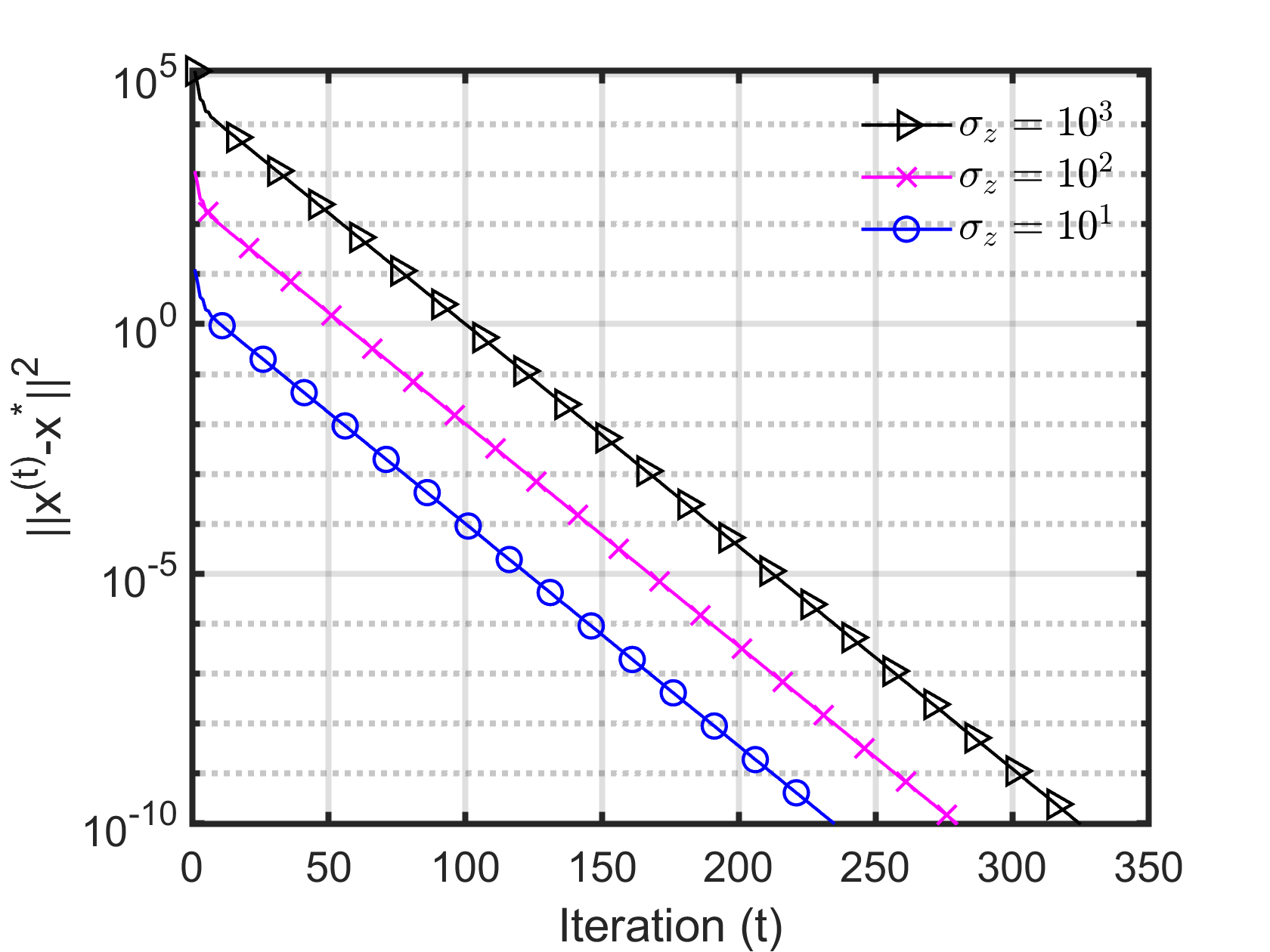}
\caption{$\theta=0.2$}
\end{subfigure}
\begin{subfigure}{0.32\textwidth}
\includegraphics[width=0.9\linewidth]{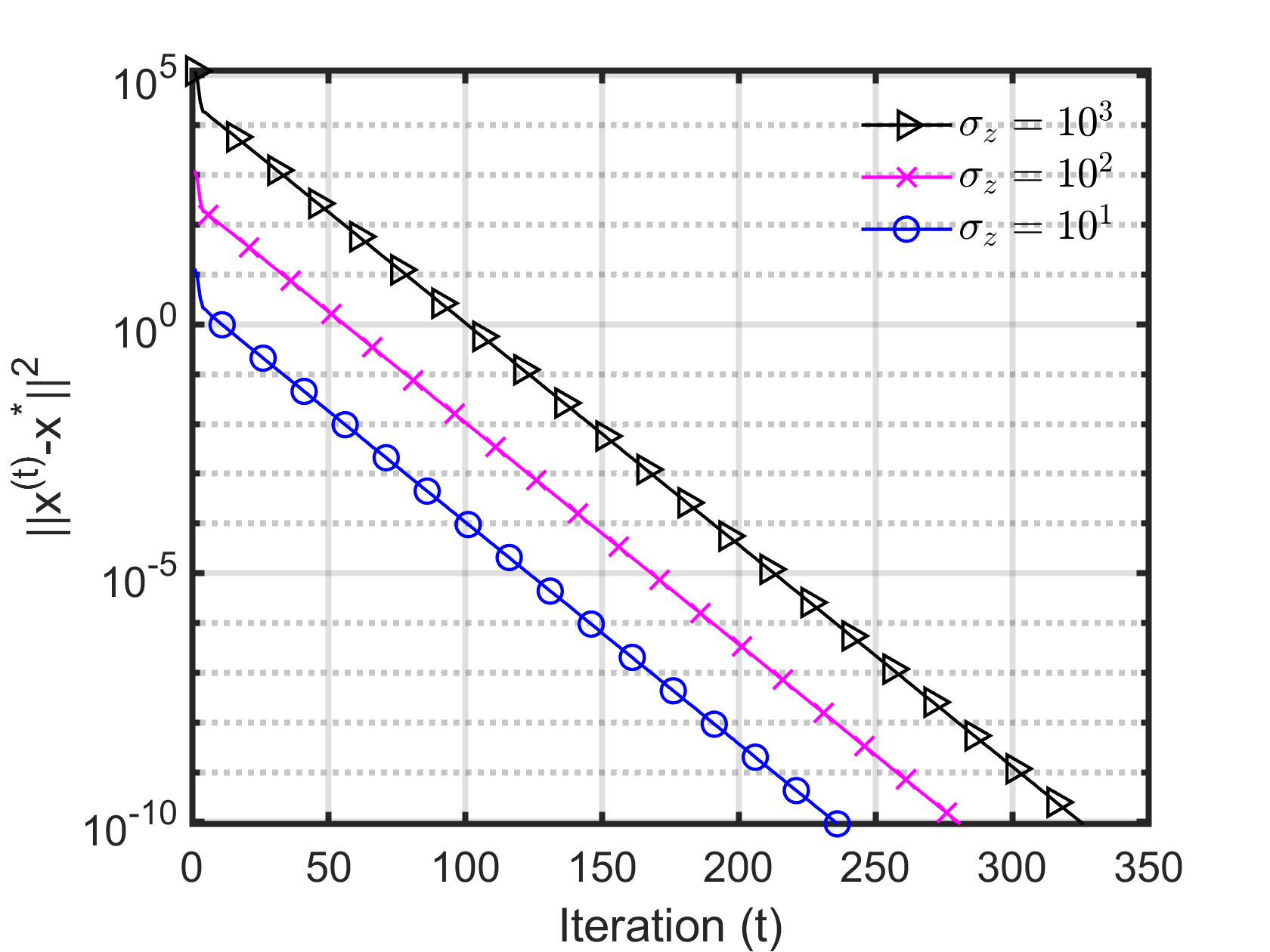}
\caption{$\theta=0.5$}
\end{subfigure}
\caption{Convergence of the optimization variable in terms of three different variances of the auxiliary variable, i.e., $\sigma_z=10^3$, $\sigma_z=10^2$ and $\sigma_z=10^1$ given three different distributed optimizers: (a) $\theta=0$ (PDMM), (b) $\theta=0.2$ and (c) $\theta=0.5$ (ADMM), wherein $\Delta_{\sss \mathrm{min}}=0$.}
\label{fig.conVar}
\end{figure*}

\begin{figure*}[ht]
\begin{subfigure}{0.32\textwidth}
\includegraphics[width=0.9\linewidth]{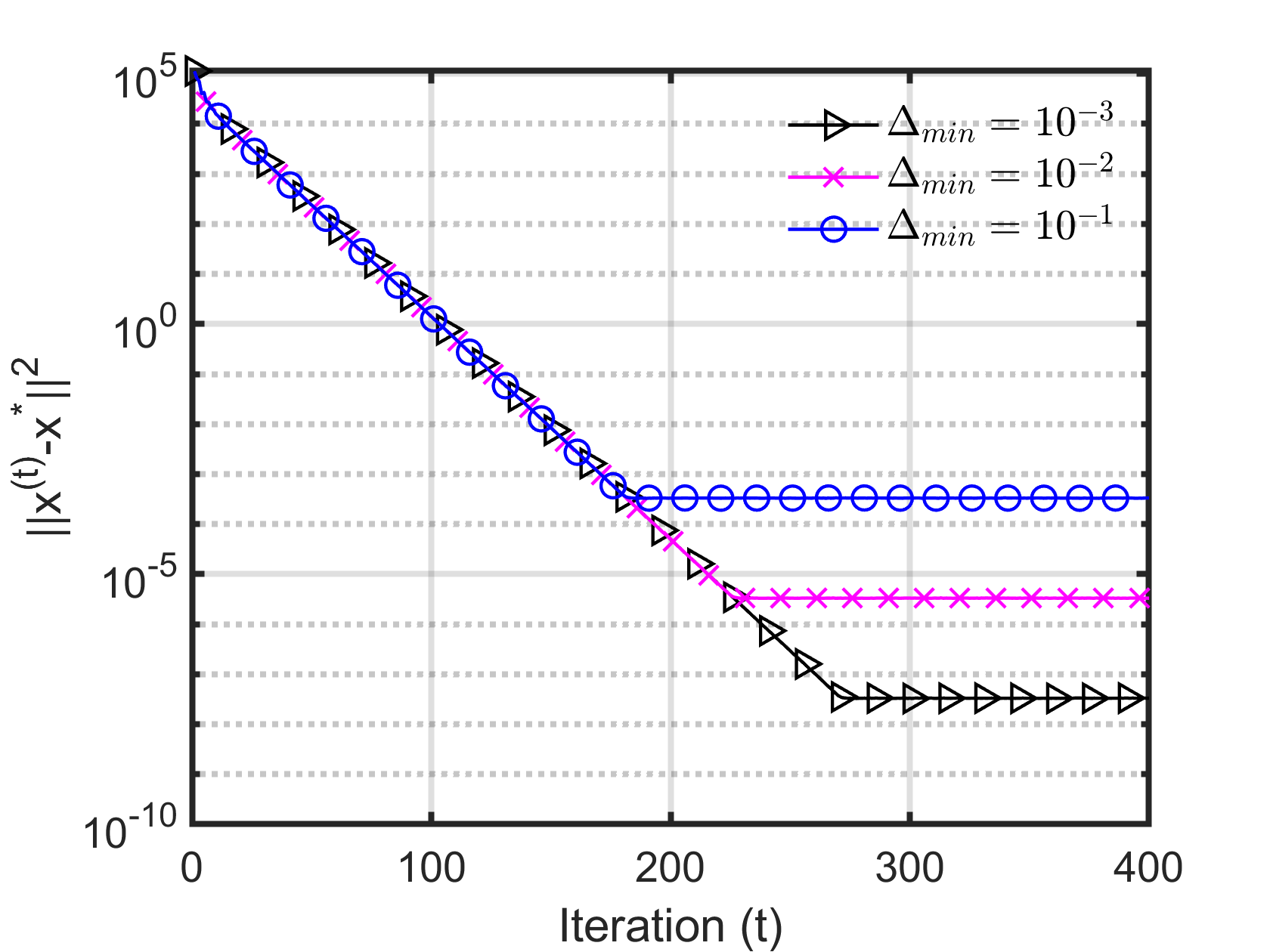} 
\caption{$\theta=0$}
\end{subfigure}
\begin{subfigure}{0.32\textwidth}
\includegraphics[width=0.9\linewidth]{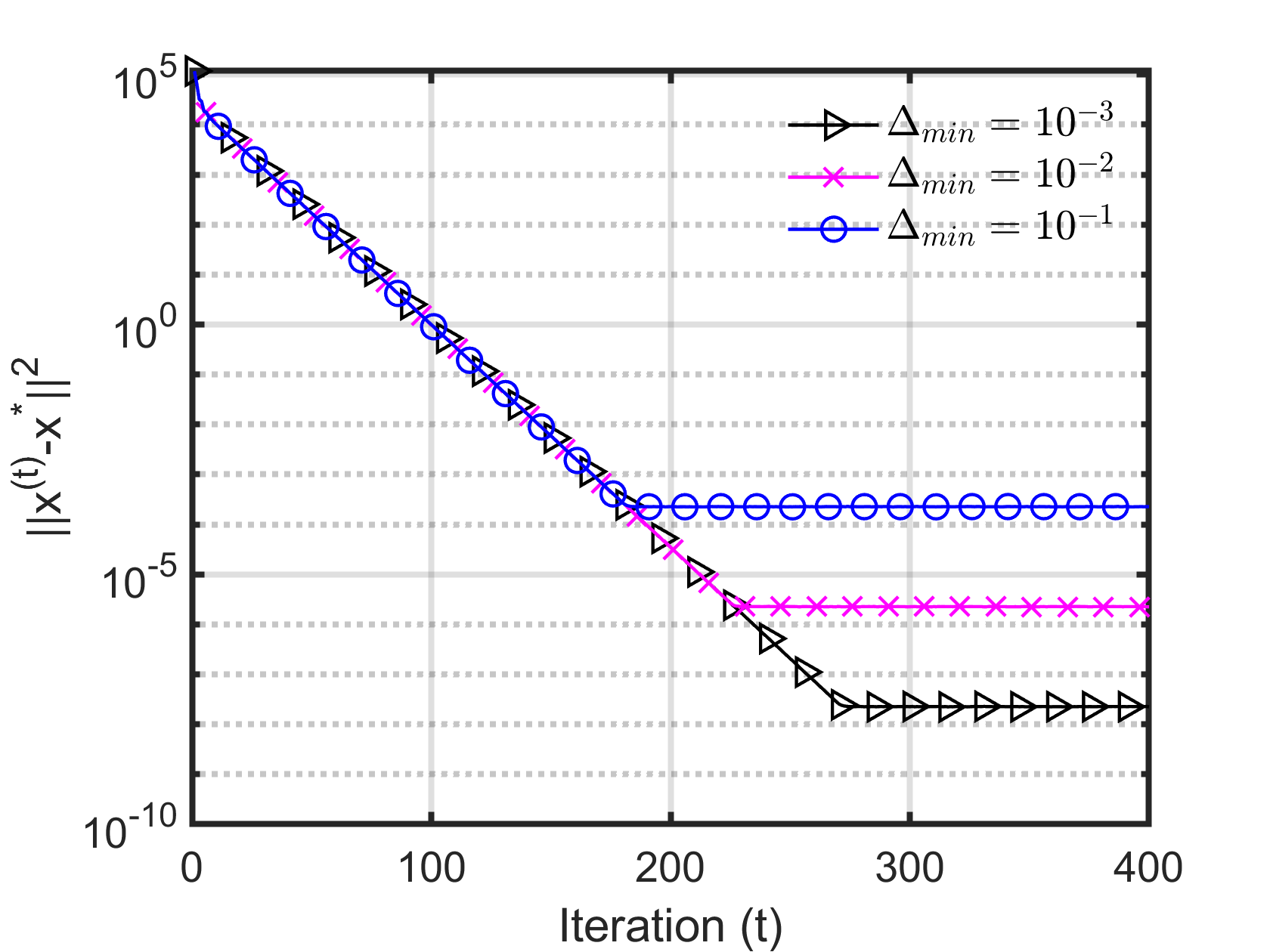}
\caption{$\theta=0.2$}
\end{subfigure}
\begin{subfigure}{0.32\textwidth}
\includegraphics[width=0.9\linewidth]{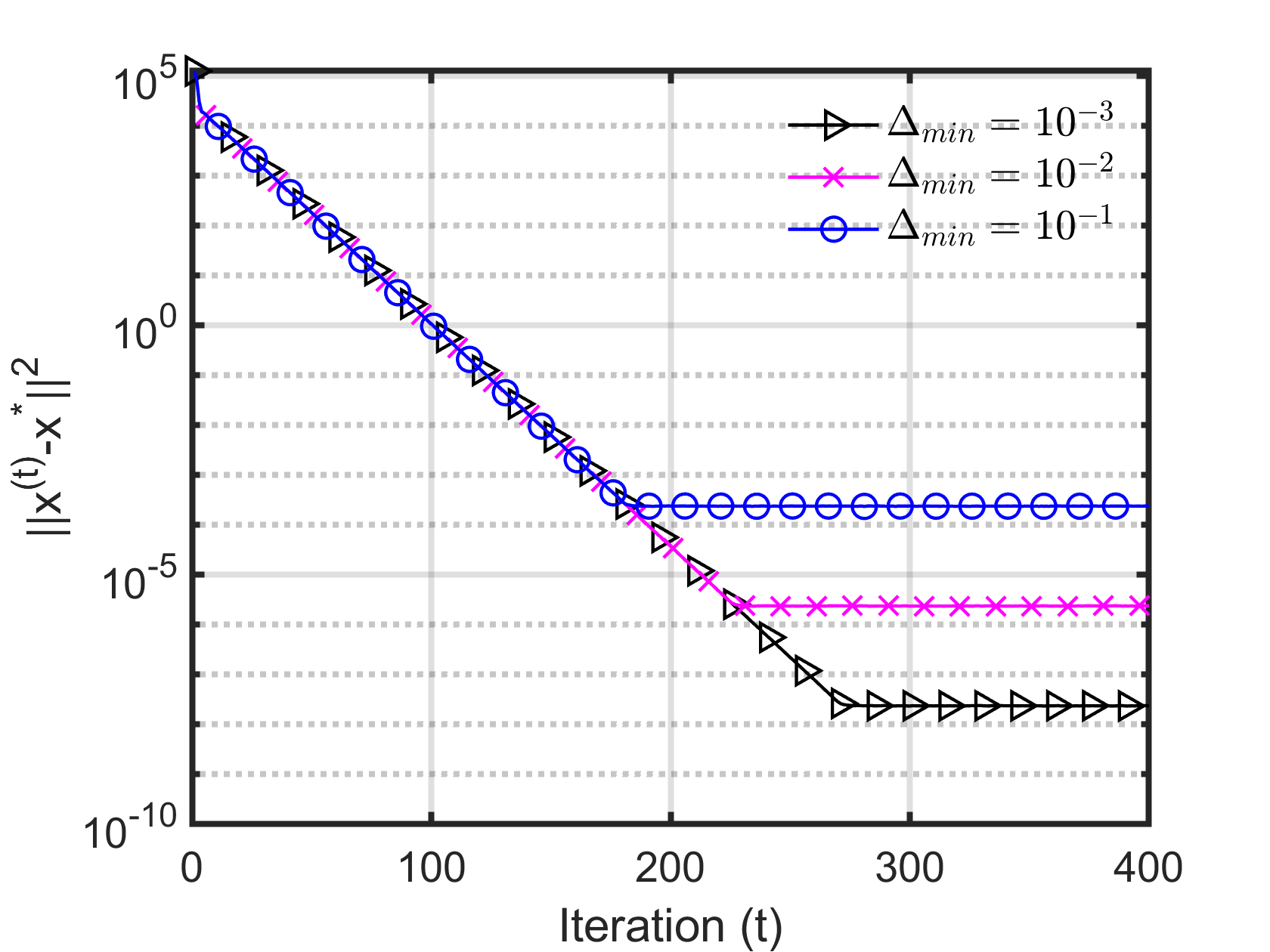}
\caption{$\theta=0.5$}
\end{subfigure}
\caption{Convergence of the optimization variable in terms of three different quantization parameter setting, i.e., $\Delta_{\sss \mathrm{min}}=10^{-3}$, $\Delta_{\sss \mathrm{min}}=10^{-2}$ and $\Delta_{\sss \mathrm{min}}=10^{-1}$ given three different distributed optimizers: (a) $\theta=0$ (PDMM), (b) $\theta=0.2$ and (c) $\theta=0.5$ (ADMM), wherein $\sigma_z=10^3$.}
\label{fig.conDelta}
\end{figure*}

\subsection{Proposed approach achieves performances of (relaxed) DP approaches by setting $\Delta_{ \mathrm{min}}>0$}
DP approach is the worst-case scenario where there are $n-1$ corrupt nodes, i.e., ${\cal V}_c={\cal V}\setminus \{i\}$, which implies ${\cal N}_{\sss i,h}=\emptyset$ and $\mathcal{G}_i=\{i\}$. We have the following result 
\begin{theorem}\label{thm.dp} (Information loss of the proposed approach when ${\cal V}_c={\cal V}\setminus \{i\}$ and  $\Delta_{ \mathrm{min}}>0$)
The information loss is given by 
\begin{align}\label{eq:qNMI}
  {\rm I}(S_i;{\cal O}_{\sss \rm ADQSP,{\cal V}\setminus \{i\}})={\rm I}(S_i;\{S_i+c_{\sss i,k}N_{\sss k|i}^{(t+1)}\}_{t\geq 0})  
\end{align}
 where $c_{\sss i,k} = \frac{1+cd_i}{(1-\theta)2cB_{\sss i|k}}$.
\end{theorem}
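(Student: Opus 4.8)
The plan is to unpack the adversary's view $\mathcal{O}_{\sss \rm ADQSP,{\cal V}\setminus\{i\}}$ under the degenerate honest-subgraph structure ${\cal N}_{\sss i,h}=\emptyset$, $\mathcal{G}_i=\{i\}$, and show that, modulo quantities that are either known to the adversary or statistically independent of $S_i$, the view is informationally equivalent to the sequence $\{S_i + c_{\sss i,k}N_{\sss k|i}^{(t+1)}\}_{t\ge 0}$. First I would write out what the adversary actually receives: every initial auxiliary value $Z_{\sss j|k}^{(0)}$ on edges touching a corrupt node (which is every edge, since only node $i$ is honest), all the data $S_j$ for $j\ne i$, and all the quantized differences $\Delta\hat Z_{\sss j|k}^{(t)}$ for $t\ge 1$. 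The only messages originating from the honest node $i$ are the quantized increments $\Delta\hat Z_{\sss k|i}^{(t)}$ it sends to each neighbor $k$. Since the adversary holds $Z_{\sss k|i}^{(0)}$ (it was sent over a secure channel but $k$ is corrupt, so $k$ has it), by the telescoping relation \eqref{eq:tau} the adversary can reconstruct $\hat z_{\sss k|i}^{(t)}$ for all $t$. Hence the honest node's entire contribution to the view reduces to the sequence $\{\hat Z_{\sss k|i}^{(t)}\}_{t\ge1}$ for each neighbor $k\in{\cal N}_i$.

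Next I would substitute the update rule. Using \eqref{eq:zupquant} with ${\cal N}_{\sss i,h}=\emptyset$ — so all $\hat z_{\sss i|j}^{(t)}$ and $\hat z_{\sss j|i}^{(t)}$ appearing on the right-hand side are controlled/known by the adversary — together with \eqref{eq:xupL}, the only unknown entering $z_{\sss k|i}^{(t+1)}$ through $x_i^{(t+1)}$ is $s_i$ itself. Concretely, $z_{\sss k|i}^{(t+1)} = \theta\hat z_{\sss k|i}^{(t)} + (1-\theta)\big(\hat z_{\sss i|k}^{(t)} + 2cB_{\sss i|k}x_i^{(t+1)}\big)$, and $x_i^{(t+1)} = \frac{s_i - \tsum_{j\in{\cal N}_i}B_{\sss i|j}\hat z_{\sss i|j}^{(t)}}{1+cd_i}$. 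The adversary knows all the $\hat z_{\sss i|j}^{(t)}$ terms, so subtracting them off, the adversary's knowledge of $z_{\sss k|i}^{(t+1)}$ is equivalent to knowledge of $\frac{(1-\theta)2cB_{\sss i|k}}{1+cd_i}s_i$ plus adversary-known constants. But the adversary sees $\hat z_{\sss k|i}^{(t+1)} = z_{\sss k|i}^{(t+1)} + n_{\sss k|i}^{(t+1)}$, not $z_{\sss k|i}^{(t+1)}$ itself. So after removing the known offsets and rescaling by $c_{\sss i,k} = \frac{1+cd_i}{(1-\theta)2cB_{\sss i|k}}$, the information the adversary extracts from the message on edge $(i,k)$ at iteration $t+1$ is exactly $S_i + c_{\sss i,k}N_{\sss k|i}^{(t+1)}$, which gives \eqref{eq:qNMI} for a single neighbor; the full view across all neighbors and iterations is the stated collection.

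The main obstacle — and the step that needs genuine care rather than bookkeeping — is the independence argument that lets me discard everything except the $S_i + c_{\sss i,k}N_{\sss k|i}^{(t+1)}$ terms. I need the dithering property from Section~\ref{subsec:Quantized}: the quantization errors $n_{\sss k|i}^{(t+1)}$ are uniform on $[-\Delta^{(t+1)}/2,\Delta^{(t+1)}/2]$ and independent of $\Delta z_{\sss k|i}^{(t+1)}$, hence independent of $z_{\sss k|i}^{(t+1)}$ and of $S_i$. I must also confirm that the initial noises $\{Z_{\sss j|k}^{(0)}\}$ and the corrupt data $\{S_j\}_{j\ne i}$ carry no information about $S_i$ (Assumption~\ref{asu.1} and the independence of the subspace-perturbation noise from the private data), and that conditioning the view on all these independent quantities does not change the mutual information with $S_i$ — this is where a chain-rule/Markov-structure argument is needed, handling the fact that the $n_{\sss k|i}^{(t+1)}$ across different $t$ may be mutually independent but all get added to the same $S_i$. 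Once that decoupling is established, the identity \eqref{eq:qNMI} follows, and I would note that unlike the $\Delta_{\sss\mathrm{min}}=0$ case, here $\Delta^{(t)}\to\Delta_{\sss\mathrm{min}}>0$ so the noise does not vanish, leaving residual privacy exactly as in the DP perturbation model \eqref{eq.DPpriv} with effective noise $c_{\sss i,k}N_{\sss k|i}^{(t+1)}$ in place of $R_i$.
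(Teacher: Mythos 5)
Your plan follows essentially the same route as the paper's proof in Appendix~\ref{thmpf.dp}: you reduce the honest node's outgoing messages, after subtracting adversary-known terms and rescaling by $c_{\sss i,k}$, to the observations $S_i+c_{\sss i,k}N_{\sss k|i}^{(t+1)}$ (this is exactly the combination of \eqref{eq:difzij_c} and \eqref{eq:sihc} yielding \eqref{eq:mix2} with ${\cal N}_{\sss i,h}=\emptyset$), and then invoke dithering plus independence of $\{S_j\}_{j\neq i}$ and $\{Z^{(0)}\}$ to discard the rest of the view, which is what the paper's chain of equalities (a)--(e) carries out in detail. The step you flag as needing care (showing the remaining view is reconstructible from, and adds nothing beyond, these observations) is handled in the paper via the recursions \eqref{eq:ztquant} and \eqref{eq:ztquant0}, but your outline contains the same ideas.
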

\begin{proof}
    See Appendix \ref{thmpf.dp}.
\end{proof}
Note that in this case if $\Delta_{ \mathrm{min}}=0$, we have $N_{\sss k|i}^{(t)} \rightarrow 0$ thus the above ${\rm I}(S_i;{\cal O}_{\sss \rm ADQSP,{\cal V}\setminus \{i\}})={\rm I}(S_i;\{S_i+c_{\sss i,k}N_{\sss k|i}^{(t+1)}\}_{t\geq 0})={\rm I}(S_i;S_i)$ which is maximal, i.e., all the private information is revealed. 

Therefore,  we should set $\Delta_{ \mathrm{min}}>0$ and the corresponding quantization noise  $N_{\sss k|i}^{(t+1)}$ will help to guarantee privacy. 
By inspecting the output accuracy \eqref{eq:mseADQSP} and individual privacy \eqref{eq:qNMI} we can see that increasing $\Delta_{ \mathrm{min}}$ will result less accurate output and more privacy guarantee as the quantization noise $n_{i|j}^{(t)}$s at convergence becomes larger.
This complies to the privacy-accuracy trade-off of DP discussed in Section \ref{sec:DP}. 
Overall, we conclude that by setting $\Delta_{ \mathrm{min}}>0$, the proposed ADQSP gives privacy guarantees in the presence of $n-1$ corrupt nodes, analogous to DP. Its output accuracy and individual privacy are given by
\begin{align} \label{eq:perfADQSP2}
 &\left( e_{\sss \rm ADQSP}=\frac{1}{n} \sum_{i\in {\cal V}} r^2_{i, \sss \mathrm{min}}; \right.
\nonumber \\
 &\left.{\rm I}(S_i;{\cal O}_{\sss \rm ADQSP,{\cal V}\setminus \{i\}})={\rm I}(S_i;\{S_i+c_{\sss i,k}N_{\sss k|i}^{(t+1)}\}_{t\geq 0}) 
.\right)
\end{align}

\section{Numerical results} \label{sec.numRes}
\begin{figure}
  \centering
  \includegraphics[width=.40\textwidth]{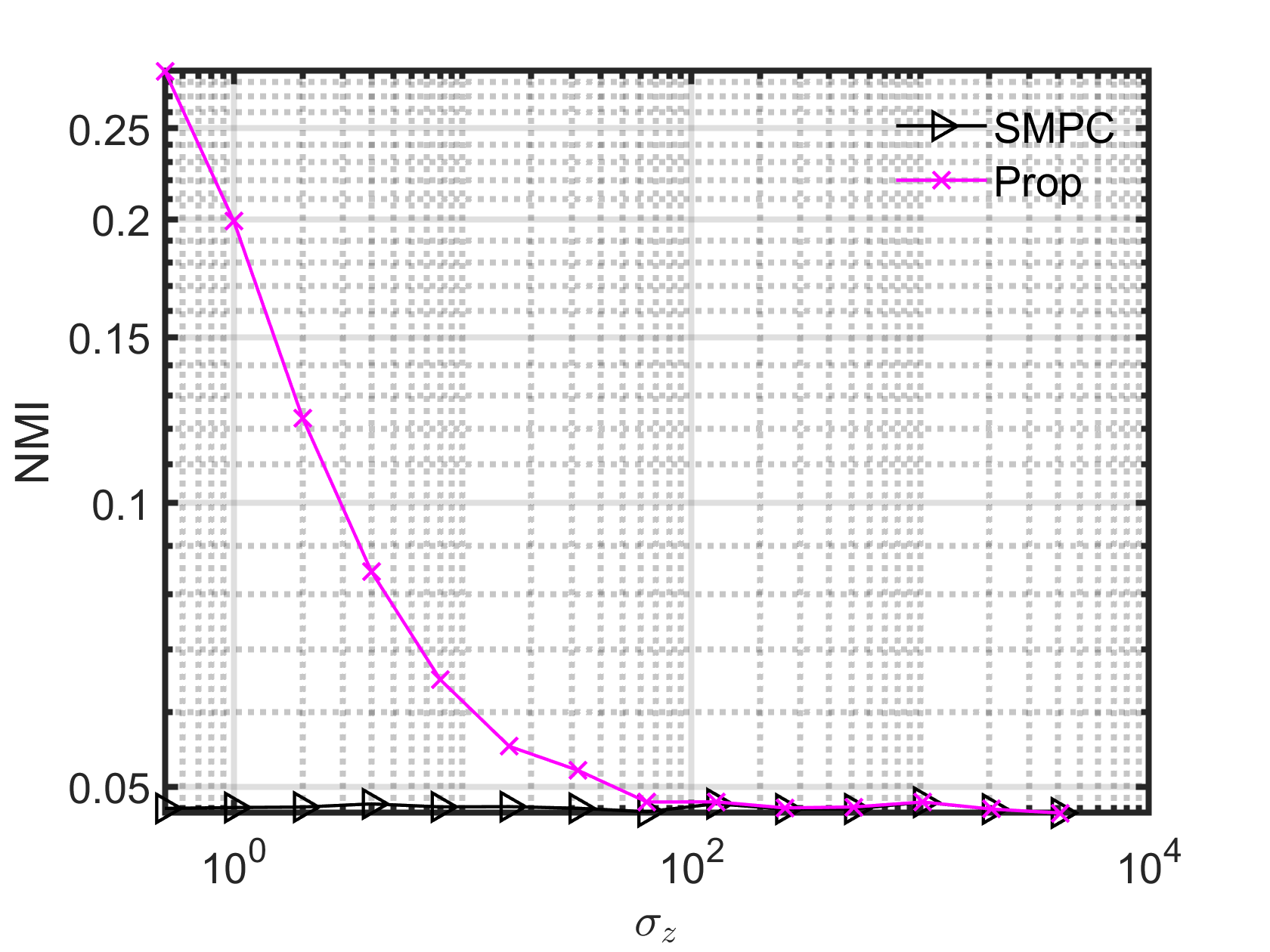}
  \caption{Individual privacy: normalized mutual information (RHS of \eqref{eq:spUpper1}) as a function of $\sigma_z$ using the proposed approach; NMI of ${\rm I}(S_i;\tsum_{\sss j\in {\cal V}_{\sss h,1}}S_j)$ using the existing SMPC approach. }
  \label{fig.nmi_SMPC}
\end{figure}

\begin{figure}
  \centering
  \includegraphics[width=.40\textwidth]{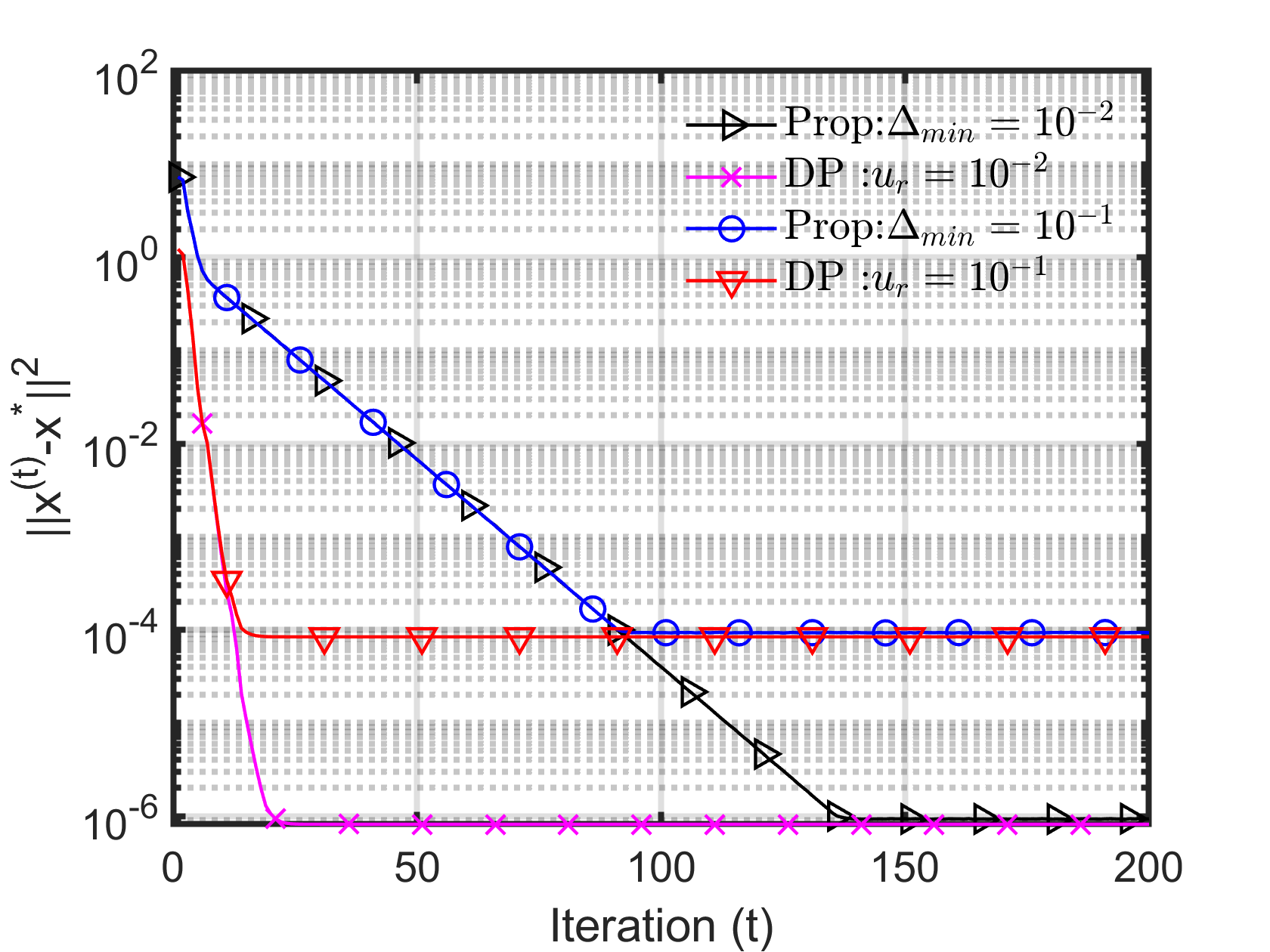}
  \caption{Output accuracy: MSE of the optimization variable in terms of iteration numbers using the proposed approach and DP approach under two different sets of parameter.}
  \label{fig.com_DP}
\end{figure}

\begin{figure}
  \centering
  \includegraphics[width=.40\textwidth]{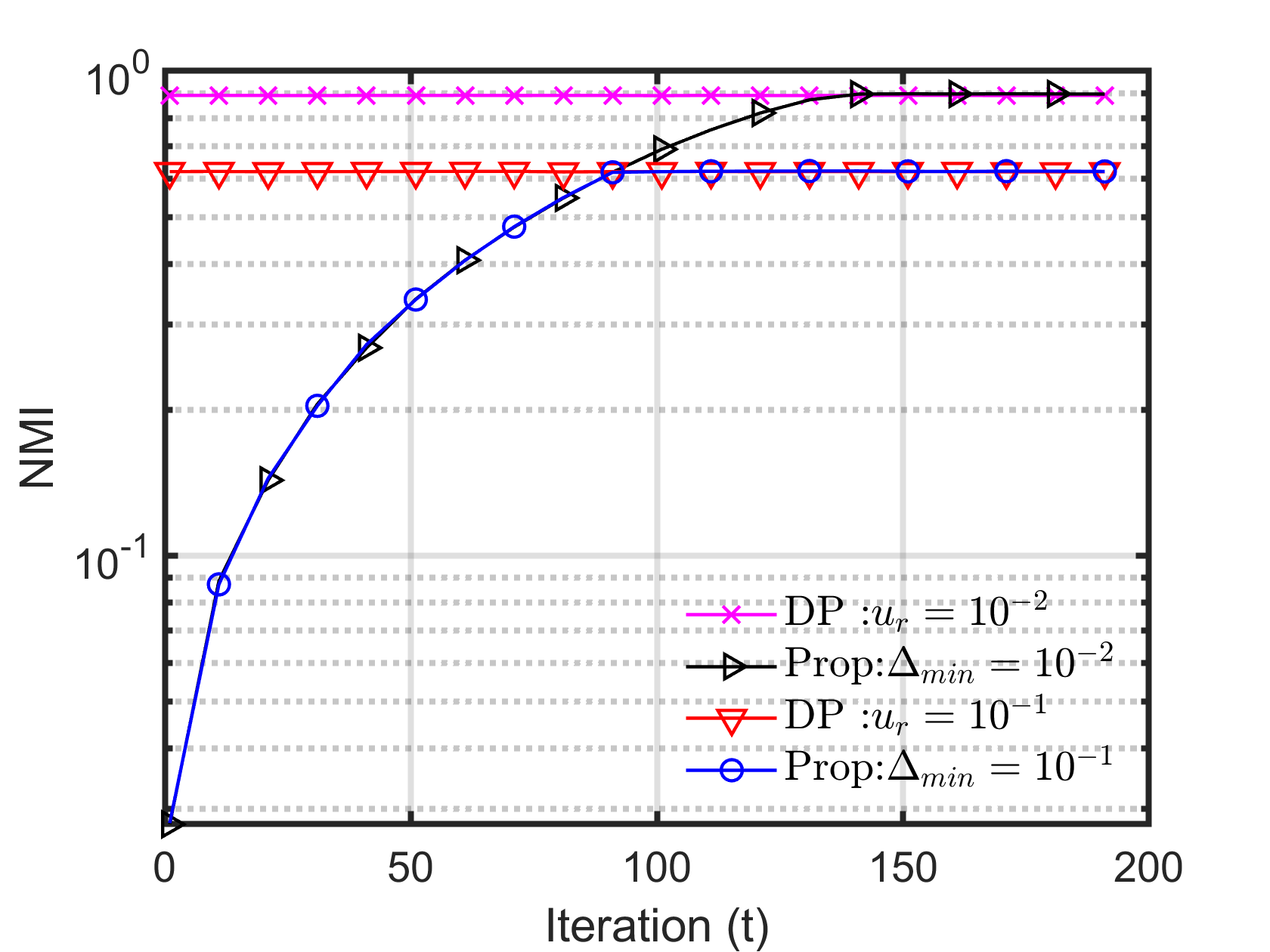}
  \caption{Individual privacy: NMI as a function of iteration numbers using the proposed approach and DP approach under two different set of parameters.}
  \label{fig.nmi_DP}
\end{figure}

In this section, simulation results will be presented to demonstrate the performance of the proposed approach and its connections to both SMPC and DP.

\subsection{Convergence behavior of the proposed approach}
To simulate a distributed network, we simulate a geometric graph with $n=30$ nodes in a room of size $1  \times 1  \times 1 $ meter and the coordinates of each node are randomly generated from uniform distribution within the range of $[0,1]$. Every two nodes are neighbors if and only if their distance is within the radius $\sqrt{\frac{2\log (n)}{n}}$, this guarantees that the generated graph will be connected with a high probability \cite{dall2002random}. Without loss of generality, all private data $\{s_i\}_{i\in {\cal V}}$ are randomly drawn from a Gaussian distribution with unit variance and zero mean. The constant $c$ for controlling the convergence rate is set as $1$. The bitrate $l$ is set as $2$. 

To demonstrate the flexibility and general applicability of the proposed approach, in Fig. \ref{fig.conVar} and \ref{fig.conDelta} we show its convergence behavior under different parameter settings. Each experiment is averaged over $10^4$ times. The mutual information is estimated using the npeet toolbox \cite{ver2000non}. 
\begin{enumerate}
  \item First, we demonstrate that the proposed approach is applicable to different types of distributed optimizers, i.e., to different choices of $\theta\in [0,1)$. We take $\theta=0,~0.2,~0.5$; note that $\theta=0$ corresponds to PDMM and $\theta=0.5$ corresponds to ADMM. As can be seen from Fig.~\ref{fig.conVar} and \ref{fig.conDelta}, the convergence behavior is independent of the choice of $\theta$, and our approach can be applied to different distributed optimizers. 
  \item In Fig. \ref{fig.conVar} we demonstrate the convergence behavior of the optimization variable by varying the variance of the initialized auxiliary variable, which controls the achieved privacy level (see details below in Fig. \ref{fig.nmi_SMPC}). Clearly, we can see that the convergence rate is independent of the variance magnitude, as a higher variance only incurs a higher initial error, thus a larger offset. 
  \item In Fig. \ref{fig.conDelta} we demonstrate the convergence behaviors by varying the quantization cell-width $\Delta_{\sss \mathrm{min}}$. We can see that, as expected, the output accuracy is degraded by increasing $\Delta_{\sss \mathrm{min}}$ (in Fig. \ref{fig.nmi_DP} we will show the privacy leakage becomes less by increasing $\Delta_{\sss \mathrm{min}}$). Hence, the trade-off between privacy and accuracy is controlled by the parameter $\Delta_{\sss \mathrm{min}}$.
\end{enumerate}

\subsection{Connection to existing SMPC and DP approaches}
We now proceed to consolidate our claims that the proposed approaches can reduce to both SMPC and DP by choosing appropriate parameter settings. 
\subsubsection{SMPC}
In Fig. \ref{fig.conVar} we show that by setting $\Delta_{\sss \mathrm{min}}=0$ the proposed approach is able to converge to accurate output average result,  similar to the results reported in SMPC approach \cite{li2019privacyA,gupta2017privacy}. As for individual privacy, we consider the case where there are only two honest nodes in the network and they are connected to each other. In Fig. \ref{fig.nmi_SMPC} we show the achieved privacy level for one honest node of the proposed approach, by taking the normalized mutual information (NMI) on RHS of \eqref{eq:spUpper1} as a function of the variance $\sigma_z$. In addition, we also depict the achieved privacy of the existing SMPC approach, i.e., the normalized mutual information ${\rm I}(S_i;\tsum_{\sss j\in {\cal V}_{\sss h,1}}S_j)$ in the redefined ideal world \eqref{eq:reidealSMPC}. We see that by increasing the variance $\sigma_z$, the proposed approach gets closer to the bound given in SMPC approach. Hence, with Fig. \ref{fig.conVar} and  \ref{fig.nmi_SMPC}, we validate the claim that the proposed approach can achieve the same performances as SMPC approach when $\Delta_{\sss \mathrm{min}}=0$.
\subsubsection{DP}
In order to demonstrate that the proposed approach achieves similar privacy to a DP approach by setting $\Delta_{\sss \mathrm{min}}>0$, we assume that the noise added in DP approach is uniformly distributed over the range $[\frac{-u_r}{2},\frac{u_r}{2}]$. By inspecting Fig. \ref{fig.com_DP} we can see that by setting $u_r=\Delta_{\sss \mathrm{min}}$, the proposed approach achieves similar  output accuracy as the DP  approach. Note that by tuning parameters $c$ the constant for controlling the convergence rate and $\gamma$ the decreasing rate of quantization cell-width, it is possible to achieve a good combination which even has a faster convergence rat compared to the non-quantized case (see \cite{schellekens2017quantisation,jonkman2018quantisation} for more details).
As for individual privacy, let node $i$ be the only honest node and for simplicity we let $c=d_i=1$ and $\theta=0$ such that $c_{\sss i,k} = \frac{1+cd_i}{(1-\theta)2cB_{\sss i|k}}=1$ in \eqref{eq:qNMI}. In Fig. \ref{fig.nmi_DP} we plot the NMI of ${\rm I}(S_i;\{S_i+c_{\sss i,j}N_{\sss j|i}^{(t+1)}\}_{\sss j\in {\cal N}_i})$ as a function of the iteration number $t$ using the proposed approach. For the DP type approach, the NMI is given by ${\rm I}(S_i;S_i+R_i)$. From the figure we conclude that as the iteration grows, ${\rm I}(S_i;\{S_i+c_{\sss i,j}N_{\sss j|i}^{(t+1)}\}_{\sss j\in {\cal N}_i})$ of the proposed approach can attach to the ${\rm I}(S_i;S_i+R_i)$ of DP type approach, which substantiates our theoretical results.

\section{Conclusion} \label{sec.conclu}
In this paper, we proposed a novel privacy-preserving distributed average consensus approach called ADQSP. 
Through a comprehensive information-theoretical privacy analysis, we demonstrate the flexibility and optimality of the proposed approach. Though not directly derived from existing cryptographic tools, it enjoys both the benefits of SMPC and DP techniques and can achieve their performances by controlling the quantization parameter. Experimental results are presented to validate our theoretical analysis.

\appendices
\section{List of Abbreviations}

\begin{table}[t]
  \centering
  \begin{tabular}{l|l}
  \hline
    Abbreviation & Full description \\
    \hline
    SMPC & Secure multiparty computation \\
    DP & Differential privacy  \\
    ADQSP & Adaptive differentially quantized subspace perturbation \\
    MSE & Mean squared error \\
    TTP & Trusted third party \\
    LDP & Local differential privacy\\
   \hline    
  \end{tabular}
  \vspace{3pt}
  \caption{Abbreviations}
  \label{tab:abbreviation}
\end{table}
\section{Broadcast alternative of Algorithm \ref{alg:ave}} \label{appen.broad}
\begin{algorithm}[t]
\caption{Distributed optimization for average consensus (broadcast)}
  At each $i \in {\cal V}$:
  \begin{enumerate}
 \item \label{step.zini} Initialize $\{ z_{\sss i|j}^{(0)}=0\}_{\sss j\in {\cal N}_i}$;
		\item  For $t=0,1,\ldots,t_{\max}-1$ do
		\begin{enumerate}
		  \item Compute $ x_{\sss i}^{(t+1)}$ using \eqref{eq:xup};
		  \item Broadcast $ x_i^{(t+1)}$ to all $j\in {\cal N}_i$;
		  \item Compute $\forall j \in {\cal N}_{\sss i}: z_{\sss i|j}^{(t+1)}, z_{\sss j|i}^{(t+1)}$ using \eqref{eq:zup};
		\end{enumerate}
		\item  Output $ x_i^{(t_{\max})}$
  \end{enumerate}
\end{algorithm}
\section{Proof of Proposition~\ref{prop:smpc}}
 To explain the main idea we use the case that $k_h=2$, i.e., the adversary disconnects the honest nodes into two components denoted as a ``left'' component $\mathcal{L}$ and a ``right'' component $\mathcal{R}$. Let $\v{s}_\mathcal{L}$, $\v{s}_\mathcal{R}$ and $\v{s}_{{\cal V}_c}$ be vectors consisting of the inputs for the left component, the right component, and the corrupt nodes, respectively.
\label{pf.smpc}
\begin{proof}
Let $\operatorname{F}$ be the protocol outputting $\sum_{i=1}^n s_i$. The view of the adversary is
  \begin{align}\label{eq:view_sum_proof}
    {\cal O}_{\sss \operatorname{F},{\cal V}_c}=\{\v{s}_{{\cal V}_c},\v{r}_{{\cal V}_c},\v{m}_\mathcal{L},\v{m}_\mathcal{R}, f(\v{s}_\mathcal{L},\v{s}_{{\cal V}_c},\v{s}_\mathcal{R})\},
  \end{align}
where $\v{r}_{{\cal V}_c}$ is a vector containing the so-called randomness from the corrupt nodes, $\v{m}_\mathcal{L}$ is a vector containing all messages received from nodes in $\mathcal{L}$ and similarly for $\v{m}_\mathcal{R}$  in $\mathcal{R}$. 

The adversary can simulate a run of the protocol where it takes the actions for the nodes in $\mathcal{R}$. It means that it chooses some inputs and randomness and follow the steps in $\operatorname{F}$, where it uses the messages $\v{m}_\mathcal{L}$ when needed in $\operatorname{F}$. If at some point an entry in $\v{m}_\mathcal{R}$ differs from the view it will abort the protocol and start over. Since $\v{s}_\mathcal{R}$ and $\v{r}_\mathcal{R}$ is a valid choice (there might of course be several others) the adversary will succeed at some point meaning that it will find an $\tilde{\v{s}}_\mathcal{R}$ and $\tilde{\v{r}}_\mathcal{R}$ giving the exact same view as in \eqref{eq:view_sum_proof}. Since the adversary uses the exact same messages from the nodes in $\mathcal{L}$ as in the real execution of $\operatorname{F}$ and the correct output is included in the view we must have $f(\v{s}_\mathcal{L},\v{s}_{{\cal V}_c},\v{s}_\mathcal{R})=f(\v{s}_\mathcal{L},\v{s}_{{\cal V}_c},\tilde{\v{s}}_\mathcal{R})$. Hence, the adversary can determine 
  \begin{align*}
    \tsum_{\sss i\in \mathcal{L}} s_i = f(\v{s}_\mathcal{L},\v{s}_{{\cal V}_c},\v{s}_\mathcal{R})-\tsum_{\sss i \in {\cal V}_c} s_i -\tsum_{\sss i \in R} \tilde{s}_i
  \end{align*}
  When the adversary knows $\tsum_{\sss i \in \mathcal{L}}s_i$ it can of course also determine $\tsum_{\sss i \in \mathcal{R}}s_i=\tsum_{\sss i=1}^n s_i-\tsum_{\sss i \in {\cal V}_c}s_i-\tsum_{\sss i \in \mathcal{L}}s_i$.
The above proof can easily be generalized to the case where $k_h>2$ and that in this case the adversary will learn all the sums of the private data in each component. Hence, the proof is now complete.
\end{proof}
\section{Proof of Proposition~\ref{prop.add}}
\label{pf.add}
\begin{proof}
  The first equality holds since 
  \begin{align}\label{eq:add_reconstruction}
  (r_i^i+\tsum_{\sss j\in {\cal N}_i}r_i^j) ~\mathrm{mod}~ p = s_i.
\end{align}
For the last equality we consider two cases. First assume $k=i$, then clearly ${\rm I}(S_i;\{ R_i^j\}_{\sss j\in {\cal N}_i'\setminus\{k\}})=0$ since all the $R_i^j$ are independent of $S_i$. The last case considers $k\neq i$, and in this case we have that 
  \begin{align*}
    &{\rm I}(S_i;\{ R_i^j\}_{\sss j\in {\cal N}_i'\setminus\{k\}})\\
    &\stackrel{\text{$(a)$}}{=}{\rm I}(S_i;\{ R_i^j\}_{\sss j\in {\cal N}_i\setminus\{k\}},S_i-R_i^k)
    \\&\stackrel{\text{$(b)$}}{=}{\rm I}(S_i;S_i-R_i^k),
  \end{align*}
  where $(a)$ uses \eqref{eq:add_reconstruction} and the fact that ${\rm I}(X; Y, Z) = {\rm I}(X; f (Y, Z))$ if $f$ is a bijective function; $(b)$ follows from independence between $\{S_i,S_i-R_i^k\}$ and $\{ R_i^j\}_{\sss j\in {\cal N}_i\setminus\{k\}}$. Due to the fact that $R_i^k$ is uniformly distributed in $\mathbb{Z}_p$, the random variable $S_i-R_i^k$ is also uniformly distributed in $\mathbb{Z}_p$ and hence independent of $S_i$ which completes the proof of \eqref{eq:nminus1Addi}.
\end{proof}

\section{Necessary result for proving Theorem~\ref{thm:additive_privacy}} \label{lmpf.linearSP}
\begin{lemma}\label{lm.linear}
    Let $X_1,\ldots X_n$ and  $R_1,\ldots R_n$ be independent random variables satisfying $\forall i\colon {\rm I}(X_i;X_i+R_i)=0$. Then 
      \begin{align*}
   \forall i:  &{\rm I}(X_i;X_1+R_1,\ldots,X_n+R_n,\tsum_{j=1}^{n} R_j)=I(X_i;\tsum_{j=1}^{n} X_j ).
  \end{align*}
\end{lemma}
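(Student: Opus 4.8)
The plan is to massage the adversary's observation into a form where a single chain-rule split exposes exactly ${\rm I}(X_i;\sum_{j=1}^n X_j)$, with the remainder provably zero. Write $Y_j=X_j+R_j$, $S=\sum_{j=1}^n X_j$, $V=\sum_{j\neq i}X_j$ and $W=(Y_j)_{j\neq i}$. Since $\sum_j Y_j-\sum_j R_j=S$, the map $(y_1,\dots,y_n,t)\mapsto\big(y_1,\dots,y_n,\sum_j y_j-t\big)$ is a bijection, so --- using that mutual information is invariant under bijective transformations of the conditioning variables, exactly as in the proof of Proposition~\ref{prop.add} --- we get ${\rm I}(X_i;Y_1,\dots,Y_n,\sum_j R_j)={\rm I}(X_i;Y_1,\dots,Y_n,S)$. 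The chain rule then gives ${\rm I}(X_i;Y_1,\dots,Y_n,S)={\rm I}(X_i;S)+{\rm I}(X_i;Y_1,\dots,Y_n\mid S)$, so it suffices to show ${\rm I}(X_i;Y_1,\dots,Y_n\mid S)=0$, i.e.\ that $(Y_1,\dots,Y_n)=(Y_i,W)$ and $X_i$ are conditionally independent given $S$.

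I would get this from the stronger unconditional statement that $(Y_i,W)$ is independent of the pair $(X_i,V)$: since $S=X_i+V$ is a function of $(X_i,V)$, the general fact $A\perp(B,C)\Rightarrow A\perp B\mid C$ (with $A=(Y_i,W)$, $B=X_i$, $C=S$) then yields the required conditional independence. To establish $(Y_i,W)\perp(X_i,V)$ I proceed in two steps. First, because the pairs $(X_j,R_j)$ for $j\neq i$ are mutually independent, mutual information is additive over them, so ${\rm I}\big((X_j)_{j\neq i};W\big)=\sum_{j\neq i}{\rm I}(X_j;X_j+R_j)=0$ by hypothesis; hence $W\perp(X_j)_{j\neq i}$, in particular $W\perp V$, and together with $X_i\perp\{X_j,R_j\}_{j\neq i}$ and $R_i\perp(X_i,\{X_j,R_j\}_{j\neq i})$ coming from the global independence assumption, the four objects $X_i,R_i,V,W$ are mutually independent. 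Second, conditioning on $(X_i,V)=(x,v)$ leaves $(R_i,W)$ with its original (product) law, so $\Pr(Y_i=y\mid X_i=x,V=v)=\Pr(R_i=y-x)$; the hypothesis ${\rm I}(X_i;X_i+R_i)=0$ --- equivalently $X_i\perp X_i+R_i$, which, given $X_i\perp R_i$, says precisely that $\Pr(R_i=y-x)$ does not depend on $x$ over $\Supp(X_i)$ and equals $\Pr(X_i+R_i=y)$ --- then shows the conditional law of $(Y_i,W)$ given $(X_i,V)=(x,v)$ is independent of $(x,v)$ and equals the unconditional law of $(Y_i,W)$. Hence $(Y_i,W)\perp(X_i,V)$, and we conclude ${\rm I}(X_i;Y_1,\dots,Y_n,\sum_j R_j)={\rm I}(X_i;S)={\rm I}(X_i;\sum_{j=1}^n X_j)$.

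The step I expect to be the main obstacle is keeping the conditional-independence bookkeeping honest: one must pinpoint that ${\rm I}(X_i;X_i+R_i)=0$ is the single ingredient that prevents $Y_i=X_i+R_i$ from leaking $X_i$ once $S$ is known --- every other observation, namely the $Y_j$ with $j\neq i$ and $S$ itself, is dispatched by plain independence --- and one must argue carefully that unconditional independence of $(Y_i,W)$ from $(X_i,V)$ genuinely upgrades to conditional independence of $(Y_i,W)$ from $X_i$ given the function $S=X_i+V$. I would also make sure to justify the additivity identity ${\rm I}\big((X_j)_{j\neq i};(Y_j)_{j\neq i}\big)=\sum_{j\neq i}{\rm I}(X_j;Y_j)$, since this is what converts the coordinatewise hypotheses into $W\perp(X_j)_{j\neq i}$; the bijection invariance and the chain rule are routine.
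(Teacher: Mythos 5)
Your proof is correct, and it reaches the result by a genuinely different decomposition than the paper's. The paper's proof (Appendix~\ref{lmpf.linearSP}) applies the explicit upper-triangular bijection \eqref{eq:lnMap}, which turns the observation into the telescoping sequence $T_1=\sum_{j}X_j$, $T_{k+1}=T_k-(X_k+R_k)$, and then invokes a Markov-chain argument: each increment $X_k+R_k$ is independent of the $X_j$'s and of $T_k$, so everything after $T_1$ is uninformative about $X_i$. You instead use the minimal bijection that merely replaces $\sum_j R_j$ by $S=\sum_j X_j$, peel off ${\rm I}(X_i;S)$ with the chain rule, and show that the residual term ${\rm I}(X_i;Y_1,\ldots,Y_n\mid S)$ vanishes by first establishing the unconditional independence of $(Y_1,\ldots,Y_n)$ from $(X_i,\sum_{j\neq i}X_j)$ and then passing to conditional independence given $S$. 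Both arguments rest on the same underlying probabilistic fact, but your version makes explicit something the paper's one-sentence Markov-chain justification leaves implicit: what is actually required is that the increments $\{X_k+R_k\}_{k}$ are \emph{jointly} independent of the pair $(X_i,\sum_j X_j)$, not merely that each increment is individually independent of each $X_j$ and of the preceding row. Your two-step independence argument --- additivity of mutual information over independent pairs to get $W$ independent of $(X_j)_{j\neq i}$, followed by the conditional-law computation that isolates ${\rm I}(X_i;X_i+R_i)=0$ as the only non-trivial ingredient --- supplies exactly that joint statement, at the cost of a little more bookkeeping and without needing to exhibit the matrix in \eqref{eq:lnMap}. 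The measure-theoretic caveats you would still want to tidy up (densities and ``almost every $x$'' in the continuous case, and the standard fact ${\rm I}\bigl((X_j)_{j\neq i};(Y_j)_{j\neq i}\bigr)=\sum_{j\neq i}{\rm I}(X_j;Y_j)$ for mutually independent pairs) are routine and no worse than what the paper itself elides.
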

\begin{proof}
We first present the following equality:
\begin{align} 
   & \begin{bmatrix}
        1 & 1 & \cdots & 1 & 1\\
        0 & 1 & \cdots & 1 & 1 \\
        \vdots & \ddots & \ddots & \vdots & \vdots\\
        0 & \cdots & 0 & 1 &1\\
        0 & \cdots & \cdots & 0 & 1
    \end{bmatrix}
    \begin{bmatrix}
        X_1+R_1\\
        X_2+R_2\\
        \vdots\\
        X_n+R_n\\
        -\sum_{i=1}^n R_i
    \end{bmatrix} 
    =
    \begin{bmatrix}
        \sum_{i=1}^n X_i\\
        \sum_{i=2}^n X_i- R_i\\
        \vdots\\
        X_n-\sum_{i=1}^{n-1} R_i\\
        -\sum_{i=1}^n R_i
    \end{bmatrix} \label{eq:lnMap} 
\end{align}
With the above result we have 
 \begin{align*}
         &{\rm I}(X_i;X_1+R_1,\ldots,X_n+R_n,\tsum_{j=1}^{n} R_j)\\
         &={\rm I}(X_i;\tsum_{j=1}^{n} X_j, \tsum_{j=2}^{n} X_j-R_1, \ldots,X_n-\tsum_{j=1}^{n-1} R_j,\tsum_{j=1}^{n} R_j)
         \\
         &=I(X_i;\tsum_{j=1}^{n} X_j ).
    \end{align*}
where the first equality holds as the linear map in \eqref{eq:lnMap} is bijective; by inspecting the linear map we can see that the difference of the $k$'th and $(k+1)$'th rows in RHS of \eqref{eq:lnMap} is $X_k+R_k$, which is independent of all $X_i$s and the $k$'th row of \eqref{eq:lnMap}, thus  $X_i\rightarrow \tsum_{j=1}^{n} X_j\rightarrow \tsum_{j=2}^{n} X_j-R_1\rightarrow \ldots\rightarrow X_n-\tsum_{j=1}^{n-1} R_i\rightarrow \tsum_{j=1}^{n} R_j$ forms a Markov chain. Thus, the second equality holds.
\end{proof}

\section{Proof of Theorem~\ref{thm:additive_privacy}}
\label{thmpf.add}
\begin{proof}
We have that 
\begin{align*}
  &{\rm I}(S_i;{\cal O}_{\sss \rm SMPC,{\cal V}_c})\\
&={\rm I}(S_i;\{S_j\}_{\sss j\in {\cal V}_c}\cup \{R_{\sss j}^{k},R_k^j\}_{\sss (j,k)\in \mathcal{E}_c} \cup \{X^{(1)},X^{(2)}\})
\end{align*}
as from \eqref{eq:xtplus2} all $\{X^{(t)}\}_{t\geq 2}$ can be determined using $\{X^{(1)},X^{(2)}\}$. Furthermore, note that since we initialize $z_{\sss i|j}^{(0)}=0$ and  with the inputs from \eqref{eq:siprim} we have that
 \begin{align*}
     x_j^{(1)}&=\frac{ s_j'}{1+cd_j}\\
     x_j^{(2)}&=\frac{ s_j'+ 2c(1-\theta)\tsum_{\sss k \in {\cal V}_j}  x_k^{(1)}}{1+cd_j}
  \end{align*}
from \eqref{eq:xup} and \eqref{eq:zup}.

Thus, there is a bijection between $\{ s_j'\}_{\sss j\in {\cal V}}$ and $\v x^{(1)}$ and furthermore, $\{ s_j'\}_{\sss j\in {\cal V}}$ is enough for computing $ \v x^{(2)}$. In addition, the corrupt $\{ s_j'\}_{\sss j\in {\cal V}_c}$ can be computed using $\{s_j\}_{\sss j\in {\cal V}_c}$ and $\{r_{\sss j}^{k},r_k^j\}_{\sss (j,k)\in \mathcal{E}_c}$ based on \eqref{eq:siprim}. Thus we have
   \begin{align*}
    &{\rm I}(S_i;{\cal O}_{\sss \rm SMPC,{\cal V}_c})\\
    &={\rm I}(S_i;\{S_j\}_{\sss j\in {\cal V}_c},\{R_{\sss j}^{k},R_k^j\}_{\sss (j,k)\in \mathcal{E}_c},\{ S_j'\}_{\sss j\in {\cal V}_h}) \\
    &\stackrel{\text{$(a)$}}{=}{\rm I}(S_i;\{R_{\sss j}^{k},R_k^j\}_{\sss (j,k)\in \mathcal{E}_c},\{ S_j'\}_{\sss j\in {\cal V}_h})
    \\&\stackrel{\text{$(b)$}}{=}{\rm I}(S_i;\{R_{\sss j}^{k},R_k^j\}_{\sss (j,k)\in \mathcal{E}_c},\{ S_j+\tsum_{\sss k\in {\cal N}_{\sss j,h}}R_k^j-R_j^k\}_{\sss j\in {\cal V}_h})\\
    &\stackrel{\text{$(c)$}}{=}{\rm I}(S_i;\{ S_j+\tsum_{\sss k\in {\cal N}_{\sss j,h}}R_k^j-R_j^k\}_{\sss j\in {\cal V}_h})\\
    &\stackrel{\text{$(d)$}}{=}{\rm I}(S_i;\{ S_j+\tsum_{\sss k\in {\cal N}_{\sss j,h}}R_k^j-R_j^k\}_{\sss j\in {\cal V}_{\sss h,1}})\\
    &\stackrel{\text{$(e)$}}{=}{\rm I}(S_i; \tsum_{\sss j \in {\cal V}_{\sss h,1}} S_j)
  \end{align*}
  where (a) holds as the term $\{S_j\}_{\sss j\in {\cal V}_c}$ is independent of all other terms;
  $(b)$ uses the fact that ${\rm I}(X;Y,Z)={\rm I}(X;f(Y,Z))$ if $f$ is a bijective function, $(c)$ holds as all terms in $\{R_{\sss j}^{k},R_k^j\}_{\sss (j,k)\in \mathcal{E}_c}$ are independent of $\{ S_j+\tsum_{\sss k\in {\cal N}_{\sss j,h}}R_k^j-R_j^k\}_{\sss j\in {\cal V}_h}$, and $(d)$ follows from the independence of the different components, and (e) holds as (d) is a special case of Lemma \ref{lm.linear}  wherein $\tsum_{i=1}^{n} R_i=0$. 
\end{proof}

\section{Proof of Proposition~\ref{prop:share_SubspaceP}}
\label{pf.share_SubspaceP}
\begin{proof}
The equality comes immediately as $s_i$ can be determined by $\{ z_{\sss i|j}^{(0)}\}_{\sss j\in {\cal N}_i}$ and $x_i^{(1)}$ using \eqref{eq:xup}. Now consider the inequality. We have that 
\begin{align*}
    &{\rm I}(S_i;\{ Z_{\sss i|j}^{(0)}\}_{\sss j\in {\cal N}_i\setminus\{k\}},X_i^{(1)})\\
    &\stackrel{\text{$(a)$}}{=}{\rm I}(S_i;\{ Z_{\sss i|j}^{(0)}\}_{\sss j\in {\cal N}_i\setminus\{k\}},S_i- B_{\sss i|k}Z_{\sss i|k}^{(0)})\\
    &\stackrel{\text{$(b)$}}{=}{\rm I}(S_i;S_i-Z_{\sss i|k}^{(0)})\\
    &=h(S_i-Z_{\sss i|k}^{(0)})-h(Z_{\sss i|k}^{(0)})\\
    &\stackrel{\text{$(c)$}}{=}h(S_i-Z_{\sss i|k}^{(0)})- \frac{1}{2} \log (2 \pi e \sigma_z^2 )\\
    &\stackrel{\text{$(d)$}}{\leq}\log\left(1+\frac{\sigma_{\sss s}^2}{\sigma_z^2 }\right)
\end{align*}
  where (a) uses $\eqref{eq:xup}$ and the fact that ${\rm I}(X;Y,Z)={\rm I}(X;f(Y,Z))$ if $f$ is a bijective function. (b) follows as $Z_{\sss i|j}^{(0)}$ is independent of $S_i$ and $Z_{\sss i|k}^{(0)}$ and assuming the constant $ B_{\sss i|k}=1$ (for the case of $ B_{\sss i|k}=-1$ the proof is the same.). (c) holds as the entropy of a normal distribution with variance $\sigma^2$ is given by $\frac{1}{2} \log (2 \pi e \sigma^2 )$. (d) follows from the fact that the maximum entropy of a distribution with fixed variance is given by the normal distribution, thus we obtain the above upper bound. 
  
  When $\sigma_z^2\rightarrow \infty$, $\frac{\sigma_{\sss s}^2}{\sigma_z^2}\rightarrow 0$ we thus have 
  \begin{align*}
    \lim_{\sss \sigma_z^2\rightarrow \infty} {\rm I}(S_i;\{ Z_{\sss i|j}^{(0)}\}_{\sss j\in {\cal N}_i\setminus\{k\}},X_i^{(1)})=0,
  \end{align*}
  as mutual information is non-negative.
\end{proof}

\section{Necessary results for proving  Theorem~\ref{thm:qsp}}
\begin{lemma}\label{lm.xr}
Let $X, R$ be independent random variables with variance $\sigma^2_x, \sigma^2_r<\infty$, we have
\begin{align}
  \lim_{\sigma^2_r\rightarrow \infty} {\rm I}(X;X+R)=0, 
\end{align}  
i.e., $X$ is asymptotically independent of $X+R$ if $\sigma^2_r\rightarrow \infty$.
\end{lemma}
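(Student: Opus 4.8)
The plan is to show that the mutual information $\mathrm{I}(X;X+R)$ vanishes as $\sigma^2_r \to \infty$ by producing an explicit upper bound that decays to zero. Recall the standard identity $\mathrm{I}(X;X+R) = h(X+R) - h(X+R \mid X) = h(X+R) - h(R)$, where the last equality uses independence of $X$ and $R$ (so that conditioning on $X$ just shifts $R$ by a constant and leaves its differential entropy unchanged). Hence the whole problem reduces to bounding $h(X+R) - h(R)$ from above.

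The key step is the maximum-entropy principle for fixed variance: among all distributions with a given variance $\sigma^2$, the Gaussian maximizes differential entropy, with value $\tfrac12\log(2\pi e\,\sigma^2)$. Since $X$ and $R$ are independent, the variance of $X+R$ is $\sigma^2_x + \sigma^2_r$, so $h(X+R) \leq \tfrac12\log\!\big(2\pi e(\sigma^2_x+\sigma^2_r)\big)$. On the other hand, I would like a matching lower bound on $h(R)$; the cleanest route is the entropy power inequality, which gives $e^{2h(X+R)} \geq e^{2h(X)} + e^{2h(R)} \geq e^{2h(R)}$, i.e. $h(X+R)\geq h(R)$, confirming $\mathrm{I}\geq 0$ — but that is the wrong direction. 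Instead I would bound $\mathrm{I}(X;X+R)$ directly: write $\mathrm{I}(X;X+R) = h(X) - h(X\mid X+R)$ and use that the conditional entropy $h(X\mid X+R)$ is close to $h(X)$ when $R$ swamps $X$. A clean way: by the worst-case-is-Gaussian bound for additive channels, $\mathrm{I}(X;X+R) \leq \tfrac12\log\!\big(1 + \sigma^2_x/\sigma^2_r\big)$ — this is exactly the bound already used in the proof of Proposition~\ref{prop:share_SubspaceP} (steps (c)–(d) there), applied with $S_i - Z^{(0)}_{i|k}$ playing the role of $X + R$ here, and it follows from $h(X+R) \leq \tfrac12\log(2\pi e(\sigma^2_x+\sigma^2_r))$ together with $h(X+R\mid X) = h(R)$ and the fact that $h(R) \geq$ ... actually one needs $h(R)$ bounded below, which is not automatic for arbitrary $R$ of variance $\sigma^2_r$.

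This is the main obstacle: $h(R)$ can be $-\infty$ (e.g. if $R$ has a discrete or near-degenerate component), so $h(X+R) - h(R)$ need not be small. The cleanest fix is to observe that the statement should be read with the mild implicit regularity that $X$ (equivalently $X+R$) has a density and finite differential entropy, as is the case throughout the paper (Gaussian initializations, uniform dither noise). Granting that, I would argue as follows: since $R \to X+R$ and $X \to X+R$, we have $\mathrm{I}(X;X+R) = h(X+R) - h(R)$; now use the entropy power inequality in the form $e^{2h(X+R)} \geq e^{2h(R)} + e^{2h(X)}$, so $h(X+R) - h(R) \leq \tfrac12\log\!\big(1 + e^{2h(X)}/e^{2h(R)}\big)$; finally, if $R$ additionally has a density of bounded support or, more usefully, if we simply note $h(R)$ grows like $\tfrac12\log\sigma^2_r$ under any fixed "shape" for $R$ — which is the operative assumption since in the paper $R$ is a fixed distribution scaled up — then $e^{2h(X)}/e^{2h(R)} \to 0$ and the bound vanishes.

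Putting it together, the proof is short: (1) reduce $\mathrm{I}(X;X+R)$ to $h(X+R)-h(R)$ via independence; (2) apply the max-entropy bound to $h(X+R)$ and combine with the assumed growth $h(R) = \tfrac12\log\sigma^2_r + O(1)$ (or invoke EPI as above) to get $\mathrm{I}(X;X+R) \leq \tfrac12\log(1+\sigma^2_x/\sigma^2_r) + o(1)$; (3) let $\sigma^2_r\to\infty$ so the right-hand side tends to $0$, and conclude by non-negativity of mutual information that the limit is exactly $0$. I expect the authors' actual proof simply mirrors the computation already done in Appendix~\ref{pf.share_SubspaceP}, so the real content is just flagging the regularity under which $h(R)\to\infty$ at the Gaussian rate.
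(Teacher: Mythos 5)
Your overall strategy (reduce to $h(X+R)-h(R)$, then bound the two differential entropies separately) is genuinely different from the paper's, and it does not close. The paper's proof avoids differential entropies of $R$ altogether: it sets $\gamma=1/\sigma_r$, uses scale invariance of mutual information to write ${\rm I}(X;X+R)={\rm I}(\gamma X;\gamma X+R')$ with $R'=\gamma R$ of unit variance, and lets $\gamma\to 0$ so that the limit is ${\rm I}(0;R')=0$. (That argument implicitly fixes the shape of $R$ and assumes continuity of ${\rm I}$ at the degenerate point, which is the same regularity you flag, but it needs nothing about $h(R)$.)

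There are two concrete problems with your route. First, the entropy power inequality is used in the wrong direction at the decisive step: from $e^{2h(X+R)}\ge e^{2h(R)}+e^{2h(X)}$ one gets $h(X+R)-h(R)\ge\tfrac12\log\bigl(1+e^{2h(X)}/e^{2h(R)}\bigr)$, i.e.\ a \emph{lower} bound on ${\rm I}(X;X+R)$; you correctly note this the first time you mention the EPI, but then reuse the same inequality as if it produced the upper bound you need. Second, the max-entropy step does not yield the claimed $\tfrac12\log(1+\sigma_x^2/\sigma_r^2)+o(1)$ for general $R$. Writing $h(R)=\tfrac12\log(2\pi e\sigma_r^2)-D$ with $D=D\bigl(P_R\,\|\,\mathcal{N}(\mu_r,\sigma_r^2)\bigr)\ge 0$ the non-Gaussianness of $R$, your two bounds combine to ${\rm I}(X;X+R)\le\tfrac12\log(1+\sigma_x^2/\sigma_r^2)+D$, and $D$ is scale invariant, so under your own ``fixed shape scaled up'' reading it is a constant that does not vanish as $\sigma_r^2\to\infty$. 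The bound $\tfrac12\log(1+\sigma_x^2/\sigma_r^2)$ appearing in Proposition~\ref{prop:share_SubspaceP} is specific to Gaussian noise ($D=0$); as a general ``worst-case-is-Gaussian'' bound it is false (bimodal noise with widely separated narrow modes has huge $\sigma_r^2$ yet leaves $X$ nearly recoverable from $X+R$). Your diagnosis that $h(R)$ can be $-\infty$ and that the statement only makes sense for a fixed noise shape being scaled is correct and worth keeping, but to actually finish you need either the paper's scaling argument or a different upper bound on ${\rm I}(X;X+R)$ that does not pass through the entropy deficit of $R$.
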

\begin{proof}
Let $\gamma = 1/\sigma_r$ and define $R' = \gamma R$. Hence, $R'$ has unit variance. Since mutual information is invariant under scaling, we have
$
I(X ; X+R)=I\left(\gamma X ; \gamma X+R'\right).
$
As a consequence, we have
\begin{align*} \lim _{\sigma^2_r \rightarrow \infty} I(X ; X+R) &=\lim _{\gamma \rightarrow 0} I\left(\gamma X ; \gamma X+R'\right) \\ 
&= I\left(0 ; R^{\prime}\right) =0. 
\end{align*}
\end{proof}

\section{Proof of \eqref{eq:spUpper1} and \eqref{eq:spUpper2} in Theorem~\ref{thm:qsp} }
\label{thmpf:qspU}
\begin{proof}
Consider the difference of two successive $\v x$-updates \eqref{eq:xupc}:
\begin{align*}
  \v x^{(t+1)}-\v x^{(t)}= -(I+c \v C^T\v C)^{-1}(\v C^T\Delta {\v z}^{(t)}).
\end{align*}
Similarly, the difference of two successive $\v z$ updates in \eqref{eq:zupc} is given by
\begin{align}\label{eq:zt}
  \Delta {\v z}^{(t+1)}&= \theta \Delta {\v z}^{(t)}+(1-\theta)\v P \Delta {\v z}^{(t)} \nonumber \\
  &+2c(1-\theta)(\v P\v C (\v x^{(t+1)}-\v x^{(t)})).\nonumber \\
  &= \theta \Delta {\v z}^{(t)}+(1-\theta)\v P \Delta {\v z}^{(t)} \nonumber \\
  &-2c(1-\theta)(I+c \v C^T\v C)^{-1}(\v P\v C\v C^T \Delta {\v z}^{(t)}).
\end{align}
Hence, $\Delta {\v z}^{(t+1)}$ can be determined by $\Delta {\v z}^{(t)}$.

For an honest node $i$ we have
\begin{align}
  &{\rm I}(S_i; {\cal O}_{\sss \rm ADQSP,{\cal V}_c})\nonumber\\
  &={\rm I}(S_i;\{S_j\}_{\sss j\in {\cal V}_c}, \{Z_{\sss j|k}^{(0)}\}_{\sss (j,k)\in \mathcal{E}_c},\{\Delta \hat{Z}_{\sss j|k}^{(t)}\}_{\sss (j,k)\in \mathcal{E}, t\geq 1} ) \nonumber\\
   &\stackrel{\text{$(a)$}}{\leq} {\rm I}(S_i;\{S_j\}_{\sss j\in {\cal V}_c}, \{Z_{\sss j|k}^{(0)}\}_{\sss (j,k)\in \mathcal{E}_c},\{\Delta Z_{\sss j|k}^{(t)}\}_{\sss (j,k)\in \mathcal{E}, t\geq 1} ) \nonumber\\
  &\stackrel{\text{$(b)$}}{=}{\rm I}(S_i;\{S_j\}_{\sss j\in {\cal V}_c}, 
\{Z_{\sss j|k}^{(0)}\}_{\sss (j,k)\in \mathcal{E}_c},\{\Delta Z_{\sss j|k}^{(1)}\}_{\sss (j,k)\in \mathcal{E}} )\nonumber \\
 &\stackrel{\text{$(c)$}}{=}{\rm I}(S_i;\{S_j\}_{\sss j\in {\cal V}_c}, 
\{Z_{\sss j|k}^{(0)}\}_{\sss (j,k)\in \mathcal{E}_c},X^{(1)},\{\Delta Z_{\sss j|k}^{(1)}\}_{\sss (j,k)\in \mathcal{E}_h} ) \nonumber\\
 &\stackrel{\text{$(d)$}}{=} {\rm I}(S_i;\{S_j\}_{\sss j\in {\cal V}_c}, 
\{Z_{\sss j|k}^{(0)}\}_{\sss (j,k)\in \mathcal{E}_c}, X^{(1)}, \{Z_{\sss j|k}^{(0)}-Z_{\sss k|j}^{(0)}\}_{\sss (j,k)\in \mathcal{E}_{\sss h}} ) \nonumber\\
 &\stackrel{\text{$(e)$}}{=} {\rm I}(S_i; \{S_j -\tsum_{\sss  k \in {\cal N}_{\sss j,h}} B_{\sss j|k} Z_{\sss j|k}^{(0)}\}_{\sss j\in {\cal V}_h}, 
\{Z_{\sss j|k}^{(0)}-Z_{\sss k|j}^{(0)}\}_{\sss (j,k)\in \mathcal{E}_{\sss h}} ) \nonumber\\
 &\stackrel{\text{$(f)$}}{=} {\rm I}(S_i;\{S_j -\tsum_{\sss k \in {\cal N}_{\sss j,h}} B_{\sss j|k} Z_{\sss j|k}^{(0)}\}_{\sss j\in {\cal V}_{\sss h,1}}, \{Z_{\sss j|k}^{(0)}-Z_{\sss k|j}^{(0)}\}_{\sss j,k \in {\cal V}_{\sss h,1}} ) \nonumber
\end{align}
where (a) follows from the fact that $\Delta \hat{z}_{\sss j|k}^{(t)}$ is a random function of $\Delta z_{\sss j|k}^{(t)}$ due to quantization, and hence we upper bound the information leakage by replacing $\Delta \hat{z}_{\sss j|k}^{(t)}$ by $\Delta z_{\sss j|k}^{(t)}$. (b) uses the result of \eqref{eq:zt}.
 Since each node has at least one corrupt neighbor thus $\v x^{(1)}$ can be determined using $\{z_{\sss j|k}^{(0)}\}_{\sss (j,k)\in \mathcal{E}_c},\{ \Delta z_{\sss j|k}^{(1)}\}_{\sss (j,k)\in \mathcal{E}} $ through \eqref{eq:zup}. In addition, using $\v x^{(1)}$ and $\{z_{\sss j|k}^{(0)}\}_{\sss (j,k)\in \mathcal{E}_c}$ one can further determine $ \{z_{\sss j|k}^{(1)}\}_{\sss (j,k)\in \mathcal{E}_c} $, thus also $ \{ \Delta z_{\sss j|k}^{(1)}\}_{\sss (j,k)\in \mathcal{E}_c} $. Hence, (c) holds. 
Since $\Delta z_{\sss i|j}^{(1)}=z_{\sss i|j}^{(1)}-z_{\sss i|j}^{(0)}$, replace $z_{\sss i|j}^{(1)}$ using \eqref{eq:zup} we thus have 
\begin{align} \label{eq:difzij0}
  \frac{ \Delta z_{\sss i|j}^{(1)}}{1-\theta}-2c B_{\sss j|i} x_j^{(1)}= z_{\sss j|i}^{(0)}-z_{\sss i|j}^{(0)}.
\end{align}
 Hence, $\{z_{\sss j|i}^{(0)}-z_{\sss i|j}^{(0)}\}_{\sss (i,j)\in \mathcal{E}_{\sss h}}$ and $\{x_j^{(1)}\}_{\sss j\in {\cal V}_h}$ can determine $\{\Delta z_{\sss i|j}^{(1)}\}_{\sss (i,j)\in \mathcal{E}_{\sss h}}$, thus (d) holds. 
In addition, from \eqref{eq:xup} we have 
\begin{align*}
  (1+cd_i)x_{\sss i}^{(1)} +\tsum_{\sss j \in {\cal N}_{\sss i,c}} B_{\sss i|j}z_{\sss i|j}^{(0)} =s_i- \tsum_{\sss j \in {\cal N}_{\sss i,h}} B_{\sss i|j}z_{\sss i|j}^{(0)},
\end{align*}
which again all term in the LHS is known to the adversaries, and hence we obtain (e). For (e) we also remove $x^{(1)}_j$ for corrupt $j$'s since they can be computed from $\{S_j\}_{\sss j\in {\cal V}_c}, \{Z_{\sss j|k}^{(0)}\}_{\sss (j,k)\in \mathcal{E}_c}$ and after removal of this $\{Z_{\sss j|k}^{(0)}\}_{\sss (j,k)\in \mathcal{E}_c}$ is independent of the rest and can be removed. (f) follows by the independence of the different components. Hence, the proof of \eqref{eq:spUpper1} is complete.

If $\sigma_z^2 \rightarrow \infty$, based on Lemma \ref{lm.xr} we have
$i\in {\cal V}_{\sss h,1}: ~ \lim_{\sigma_z^2 \rightarrow \infty}{\rm I}(S_i;S_i -\tsum_{\sss k \in {\cal N}_{\sss i,h}} B_{\sss i|k} Z_{\sss i|k}^{(0)})=0$, i.e., the private data $S_i$ is asymptotically independent of $S_i -\tsum_{\sss k \in {\cal N}_{\sss i,h}} B_{\sss i|k} Z_{\sss i|k}^{(0)}$. Thus, the independence condition required in Lemma \ref{lm.linear} is satisfied, the proof follows similarly, we thus have, 
\begin{align}
 &{\rm I}(S_i;\{S_j -\tsum_{\sss k \in {\cal N}_{\sss j,h}} B_{\sss j|k} Z_{\sss j|k}^{(0)}\}_{\sss j\in {\cal V}_{\sss h,1}}, \{Z_{\sss j|k}^{(0)}-Z_{\sss k|j}^{(0)}\}_{\sss j,k \in {\cal V}_{\sss h,1}} ) \nonumber\\
 &={\rm I}(S_i;\tsum_{\sss j\in {\cal V}_{\sss h,1}}S_j)\nonumber.
\end{align}
Hence, proof of \eqref{eq:spUpper2} is complete.

\end{proof}

\section{Proof of \eqref{eq:spLower1}  in Theorem~\ref{thm:qsp}}
\label{thmpf:qspL}
\begin{proof}
We first derive some equalities which will be used in the proof later.
Since $\Delta z_{\sss j|i}^{(t+1)}=z_{\sss j|i}^{(t+1)}-\hat{z}_{\sss j|i}^{(t)}=\Delta \hat{z}_{\sss j|i}^{(t+1)}-n_{\sss j|i}^{(t+1)}$, replace $z_{\sss j|i}^{(t+1)}$ using \eqref{eq:zupquant} we thus have $\forall t\geq 1$
\begin{align} \label{eq:difzij}
  \frac{ \Delta \hat{z}_{\sss j|i}^{(t+1)}-n_{\sss j|i}^{(t+1)}}{1-\theta}-2c B_{\sss i|j} x_i^{(t+1)}= \hat{z}_{\sss i|j}^{(t)}-\hat{z}_{\sss j|i}^{(t)}.
\end{align}
Thus, for the corrupt neighbor $k\in {\cal N}_{\sss i,c}$ the above becomes
\begin{align} \label{eq:difzij_c}
     x_i^{(t+1)}+\frac{n_{\sss k|i}^{(t+1)}}{2c B_{\sss i|k} (1-\theta)} =\frac{-\hat{z}_{\sss i|k}^{(t)}+\hat{z}_{\sss k|i}^{(t)}}{2c B_{\sss i|k}} +\frac{\Delta \hat{z}_{\sss k|i}^{(t+1)}}{2c B_{\sss i|k} (1-\theta)}.
\end{align}
 
For the honest neighbor $j\in {\cal N}_{\sss i,h}$, using \eqref{eq:tau} the above \eqref{eq:difzij} becomes
\begin{align}\label{eq:difzij_h}
     &x_i^{(t+1)}+\frac{n_{\sss j|i}^{(t+1)}}{2c B_{\sss i|j} (1-\theta)}+\frac{z_{\sss i|j}^{(0)}-z_{\sss j|i}^{(0)}}{2c B_{\sss i|j}} \nonumber\\
     &=\frac{-\tsum_{\tau=1}^{(t)}\Delta \hat{z}_{\sss i|j}^{(\tau)}+\tsum_{\tau=1}^{(t)}\Delta \hat{z}_{\sss j|i}^{(\tau)}}{2c B_{\sss i|j}} +\frac{\Delta \hat{z}_{\sss j|i}^{(t+1)}}{2c B_{\sss i|j} (1-\theta)}.
\end{align}

Using \eqref{eq:tau}, \eqref{eq:xupL} can be written as 
\begin{align}\label{eq:sihc}
  &s_i- \tsum_{\sss j \in {\cal N}_{\sss i,h}} B_{\sss i|j}z_{\sss i|j}^{(0)}-(1+cd_i)x_{i}^{(t+1)} \nonumber\\
  &=\tsum_{\sss j \in {\cal N}_{\sss i,h}} B_{\sss i|j}(\tsum_{\tau=1}^{t}\Delta \hat{z}_{\sss i|j}^{(\tau)})+\tsum_{\sss k \in {\cal N}_{\sss i,c}} B_{\sss i|k}\hat{z}_{\sss i|k}^{(t)},
\end{align}
Note that, all terms in the RHS of the above equations are known by the adversary, i.e., included in ${\cal O}_{\sss \rm ADQSP,{\cal V}_c}$.

As a consequence, take the difference of the  LHS of \eqref{eq:difzij_h} and \eqref{eq:difzij_c} and scale up by $2c$ we have 
\begin{align}\label{eq:mix1}
    \frac{n_{\sss j|i}^{(t+1)}}{B_{\sss i|j} (1-\theta)}+\frac{z_{\sss i|j}^{(0)}-z_{\sss j|i}^{(0)}}{B_{\sss i|j}}-\frac{n_{\sss k|i}^{(t+1)}}{ B_{\sss i|k} (1-\theta)}
\end{align}
In addition, first multiple  $(1+cd_i)$ with the LHS of \eqref{eq:difzij_c} and add to LHS of  \eqref{eq:sihc} we have 
\begin{align} \label{eq:mix2}
    s_i- \tsum_{\sss j \in {\cal N}_{\sss i,h}} B_{\sss i|j}z_{\sss i|j}^{(0)}+c_{\sss i,k}N_{\sss k|i}^{(t+1)},
\end{align}
recall $c_{\sss i,k}=\frac{(1+cd_i)}{2c B_{\sss i|k} (1-\theta)}$.

With the above results we thus have
\begin{align*}
&{\rm I}(S_i; {\cal O}_{\sss \rm ADQSP,{\cal V}_c})\\
&\stackrel{\text{$(a)$}}{=} {\rm I}\left(S_i;\{S_j\}_{\sss j\in {\cal V}_c}, \{Z_{\sss j|k}^{(0)}\}_{\sss (j,k)\in \mathcal{E}_c},\{\Delta \hat{Z}_{\sss j|k}^{(t)}\}_{\sss (j,k)\in \mathcal{E}, t\geq 1}, \right.\\
&\left. \{\frac{N_{\sss j|i}^{(t+1)}}{B_{\sss i|j} (1-\theta)}+\frac{Z_{\sss i|j}^{(0)}-Z_{\sss j|i}^{(0)}}{B_{\sss i|j}}\right. \\
      &\left.-\frac{N_{\sss k|i}^{(t+1)}}{ B_{\sss i|k} (1-\theta)} \}_{\sss i\in {\cal V}_h, j\in {\cal N}_{\sss i,h},k\in {\cal N}_{\sss i,c},t\geq 0}, \right. \\
 & \left.\{S_i- \tsum_{\sss j \in {\cal N}_{\sss i,h}} B_{\sss i|j}Z_{\sss i|j}^{(0)}+c_{\sss i,k}N_{\sss k|i}^{(t+1)} \}_{\sss i\in {\cal V}_h,k\in {\cal N}_{\sss i,c},t\geq 0} \right)\\
 &\stackrel{\text{$(b)$}}{\geq} {\rm I}\left(S_i;
\{\frac{N_{\sss j|i}^{(t+1)}}{B_{\sss i|j} (1-\theta)}+\frac{Z_{\sss i|j}^{(0)}-Z_{\sss j|i}^{(0)}}{B_{\sss i|j}}\right. \\
      &\left.-\frac{N_{\sss k|i}^{(t+1)}}{ B_{\sss i|k} (1-\theta)} \}_{\sss i\in {\cal V}_{h,1}, j\in {\cal N}_{\sss i,h},k\in {\cal N}_{\sss i,c},t\geq 0}, \right. \\
 & \left.\{S_i- \tsum_{\sss j \in {\cal N}_{\sss i,h}} B_{\sss i|j}Z_{\sss i|j}^{(0)}+c_{\sss i,k}N_{\sss k|i}^{(t+1)} \}_{\sss i\in {\cal V}_{h,1},k\in {\cal N}_{\sss i,c},t\geq 0} \right) \\
 &\stackrel{\text{$(c)$}}{\geq} {\rm I}\left(S_i;
  \{\tsum_{\sss j\in {\cal V}_{h,1} }S_j- \tsum_{\sss j,l \in {\cal V}_{h,1}} B_{\sss i|j}(N_{\sss l|j}^{(t+1)}-N_{\sss j|l}^{(t+1)})\right.\\
  &\left.+\tsum_{\sss j\in {\cal V}_{h,1},k \in {\cal N}_{\sss j,c}}c_{\sss j,k}N_{\sss k|j}^{(t+1)}\}_{t\geq 0} \right) \\
   &\stackrel{\text{$(d)$}}{\geq} {\rm I}\left(S_i;
  \tsum_{\sss j\in {\cal V}_{h,1} }S_j- \tsum_{\sss j,l \in {\cal V}_{h,1}} B_{\sss i|j}(N_{\sss l|j}^{(t_{\max})}-N_{\sss j|l}^{(t_{\max})})\right.\\
  &\left.+\tsum_{\sss j\in {\cal V}_{h,1},k \in {\cal N}_{\sss j,c}}c_{\sss j,k}N_{\sss k|j}^{(t_{\max})} \right),
\end{align*}
where (a) uses the fact that \eqref{eq:mix1} and \eqref{eq:mix2} can be computed by the knowledge of the adversary. 
(b) holds by removing the first three terms and consider only the honest component ${\cal V}_{h,1}$. (c) holds by making a linear combination of the terms in (b) where the coefficients are $\frac{1}{2|\mathcal{N}_{\sss i,c}|}$ for the first terms and $\frac{1}{|\mathcal{N}_{\sss i,c}|}$ for the remaining terms. (d) holds because we consider only the last iteration $t_{\max}$.

Remark that if $\Delta_{\sss \mathrm{min}}=0$ then $N_{\sss i|j}^{(t_{\max})}$ will converge almost surely to $0$ when $t_{\max}\rightarrow \infty$. Hence in the limit we have 
\begin{align*}
   {\rm I}(S_i; {\cal O}_{\sss \rm ADQSP,{\cal V}_c})
    &\geq {\rm I}(S_i;\tsum_{\sss j \in {\cal V}_{h,1}} S_j).
\end{align*}
Hence, the proof of \eqref{eq:spLower1} is complete.
\end{proof}

\section{Proof of Theorem \ref{thm.dp}}\label{thmpf.dp}
\begin{proof}
We first present the following equality result: similar to \eqref{eq:zupc}, consider the difference of two successive $\v z$ updates in \eqref{eq:zupquant}:
\begin{align}\label{eq:ztquant}
    &\Delta {\hat{z}_{\sss j|i}}^{(t+1)}+n_{\sss j|i}^{(t+1)}-n_{\sss j|i}^{(t)}\nonumber\\
    &= \theta \Delta {\hat{z}_{\sss j|i}}^{(t)}+(1-\theta)\Delta\hat{z}_{\sss i|j}^{(t)} -\frac{2c(1-\theta) B_{\sss i|j}}{1+cd_i}\sum_{\sss k\in \mathcal{N}_i} B_{\sss i|k}\Delta\hat{z}_{\sss i|k}^{(t)}.
\end{align}
Hence, $\Delta {\hat{z}_{\sss j|i}}^{(t+1)}$ can be determined by the difference of quantization noise over successive iterations $n_{\sss j|i}^{(t+1)}-n_{\sss j|i}^{(t)}$ and $\Delta {\hat{z}_{\sss j|i}}^{(t)}$ and $\Delta {\hat{z}_{\sss i|j}}^{(t)}$ in the previous iteration. 

As for $\Delta {\hat{z}_{\sss j|i}}^{(1)}$ in the first iteration we have 
\begin{align}\label{eq:ztquant0}
    &\Delta {\hat{z}_{\sss j|i}}^{(1)}
    \stackrel{\text{$(a)$}}{=} z_{\sss j|i}^{(1)}-z_{\sss j|i}^{(0)}+n_{\sss j|i}^{(1)} \nonumber\\
    &\stackrel{\text{$(b)$}}{=} (1-\theta)(z_{\sss i|j}^{(0)}-z_{\sss j|i}^{(0)})+\frac{1}{c_{\sss i,j}}(s_i+c_{\sss i,j}n_{\sss j|i}^{(1)}-\tsum_{\sss k\in \mathcal{N}_i} B_{\sss i|k} z_{\sss i|k}^{(0)}),
\end{align}
where (a) uses \eqref{eq:zhatupquant} and \eqref{eq:noisezn}; (b) replaces $z_{\sss j|i}^{(1)}$ using \eqref{eq:xup} and \eqref{eq:zup}. 

We start our proof similar to the proof in Appendix \ref{thmpf:qspL}. However, since ${\cal V}_c={\cal V}\setminus \{i\}$, ${\cal V}_h=\{i\}$ and ${\cal N}_{\sss i,h}=\emptyset$ we do not have any honest neighbors and hence we are not adding \eqref{eq:mix1}. Thus the equality (a) in the proof of Appendix \ref{thmpf:qspL} becomes the first equality below:
 \begin{align*}
& {\rm I}(S_i; {\cal O}_{\sss \rm ADQSP,{ \cal V}\setminus \{i\}})\\
&= {\rm I}\left(S_i;\{S_j\}_{\sss j\in {\cal V}\setminus \{i\}}, \{Z_{\sss j|k}^{(0)},\Delta \hat{Z}_{\sss j|k}^{(t)}\}_{\sss (j,k)\in \mathcal{E}, t\geq 1}\right.,\\
 & \left.\{S_i+c_{\sss i,k}N_{\sss k|i}^{(t+1)} \}_{\sss k\in {\cal N}_{\sss i},t\geq 0} \right) \\
 &\stackrel{\text{$(a)$}}{=}  {\rm I}\left(S_i;\{S_j\}_{\sss j\in {\cal V}\setminus \{i\}}, \{Z_{\sss j|k}^{(0)},\Delta \hat{Z}_{\sss j|k}^{(t)}\}_{\sss (j,k)\in \mathcal{E}, t\geq 1}\right. ,\\
 & \left.\{N_{\sss k|i}^{(t+1)}-N_{\sss k|i}^{(t)}\}_{\sss k\in {\cal N}_{\sss i},t\geq 1}, \{S_i+c_{\sss i,k}N_{\sss k|i}^{(t+1)} \}_{\sss k\in {\cal N}_{\sss i},t\geq 0} \right)\\
  &\stackrel{\text{$(b)$}}{=}  {\rm I}\left(S_i;\{S_j\}_{\sss j\in {\cal V}\setminus \{i\}}, \{Z_{\sss j|k}^{(0)},\Delta \hat{Z}_{\sss j|k}^{(1)}\}_{\sss (j,k)\in \mathcal{E}}\right. ,\\
 & \left.\{N_{\sss k|i}^{(t+1)}-N_{\sss k|i}^{(t)}\}_{\sss k\in {\cal N}_{\sss i},t\geq 1}, \{S_i+c_{\sss i,k}N_{\sss k|i}^{(t+1)} \}_{\sss k\in {\cal N}_{\sss i},t\geq 0} \right)\\
   &\stackrel{\text{$(c)$}}{=} {\rm I}\left(S_i;\{S_j\}_{\sss j\in {\cal V}\setminus \{i\}}, \{Z_{\sss j|k}^{(0)}\}_{\sss (j,k)\in \mathcal{E}}\right. ,\\
 & \left.\{N_{\sss k|i}^{(t+1)}-N_{\sss k|i}^{(t)}\}_{\sss k\in {\cal N}_{\sss i},t\geq 1}, \{S_i+c_{\sss i,k}N_{\sss k|i}^{(t+1)} \}_{\sss k\in {\cal N}_{\sss i},t\geq 0} \right)\\
      &\stackrel{\text{$(d)$}}{=} {\rm I}\left(S_i;\{N_{\sss k|i}^{(t+1)}-N_{\sss k|i}^{(t)}\}_{ \sss k\in {\cal N}_{\sss i},t\geq 1}, \{S_i+c_{\sss i,k}N_{\sss k|i}^{(t+1)} \}_{ \sss k\in {\cal N}_{\sss i},t\geq 1} \right)\\
            &\stackrel{\text{$(e)$}}{=}  {\rm I}\left(S_i; \{S_i+c_{\sss i,k}N_{\sss k|i}^{(t+1)} \}_{\sss k\in {\cal N}_{\sss i},t\geq 1} \right)
 \end{align*}
 where (a) follows as $\{N_{\sss k|i}^{(t+1)}-N_{\sss k|i}^{(t)}\}_{\sss k\in {\cal N}_{\sss i},t\geq 1}$ can be computed by taking the difference of the last term over successive iterations. (b) holds as based on \eqref{eq:ztquant}, all $\{\Delta \hat{Z}_{\sss j|k}^{(t)}\}_{\sss (j,k)\in \mathcal{E}, t\geq 2}$ can be computed using $\{N_{\sss k|i}^{(t+1)}-N_{\sss k|i}^{(t)}\}_{\sss k\in {\cal N}_{\sss i},t\geq 1}$ and  $\{S_j\}_{\sss j\in {\cal V}\setminus \{i\}}, \{Z_{\sss j|k}^{(0)},\Delta \hat{Z}_{\sss j|k}^{(1)}\}_{\sss (j,k)\in \mathcal{E}, t\geq 1}$. (c) holds as $\{\Delta \hat{Z}_{\sss j|k}^{(1)}\}_{\sss (j,k)\in \mathcal{E}}$ can be computed using the other terms based on \eqref{eq:ztquant0}.  (d) 
 holds as $\{S_j\}_{\sss j\in {\cal V}\setminus \{i\}}, \{Z_{\sss j|k}^{(0)}\}_{\sss (j,k)\in \mathcal{E}}$ are independent of the rest two terms. (e) holds as $\{N_{\sss k|i}^{(t+1)}-N_{\sss k|i}^{(t)}\}_{\sss k\in {\cal N}_{\sss i},t\geq 1}$ can be determined by  $ \{S_i+c_{\sss  i,k}N_{\sss k|i}^{(t+1)} \}_{\sss k\in {\cal N}_{\sss i},t\geq 1}$. Hence, the proof is complete.
\end{proof}

\section*{Acknowledgment}
This work was supported in part by the European Union’s Horizon 2020 Research and Innovation Programme under the Marie Skłodowska-Curie under Grant 101008233 and in part by ERC Consolidator under Grant 864075 “Caesar”.

\bibliographystyle{IEEEbib}
\bibliography{dualpath}
\end{document}